\newcommand{\R}{\mathbb{R}}
\newcommand{\1}{\mathbf{1}}
\DeclareMathOperator{\rank}{rank}
\DeclareMathOperator{\var}{var}
\DeclareMathOperator{\cov}{cov}
\DeclareMathOperator*{\argmax}{argmax}
\DeclareMathOperator*{\argmin}{argmin}
\newcommand{\asf}{\mathcal{E}}
\newcommand{\vb}{\mathbf{v}}
\newcommand{\xb}{\mathbf{x}}
\newcommand{\yb}{\mathbf{y}}
\newcommand{\zb}{\mathbf{z}}
\newcommand{\etab}{\bm{\eta}}
\newcommand{\Hb}{\mathbf{H}}
\newcommand{\Ib}{\mathbf{I}}
\newcommand{\Pb}{\mathbf{P}}
\newcommand{\Ab}{\mathbf{A}}
\newcommand{\Cb}{\mathbf{C}}
\newcommand{\Db}{\mathbf{D}}
\newcommand{\db}{\mathbf{d}}
\newcommand{\qb}{\mathbf{q}}
\newcommand{\normdist}{\mathrm{N}}
\newcommand{\phib}{\bm{\phi}}
\newcommand{\Head}{\textnormal{Head}}
\newcommand{\Tail}{\textnormal{Tail}}
\newcommand{\tdropout}{t_{\mathrm d}}
\newcommand{\pdropout}{p_{\mathrm d}}
\newcommand{\mG}{\mathrm{G}}
\newcommand{\mV}{\mathrm{V}}
\newcommand{\mE}{\mathrm{E}}
\newcommand{\mT}{\mathrm{T}}
\newcommand{\mW}{\mathrm{W}}
\newcommand{\mU}{\mathrm{U}}
\newcommand{\mGrand}{\mathsf{G}}
\newcommand{\mVrand}{\mathsf{V}}
\newcommand{\mErand}{\mathsf{E}}
\newcommand{\mTG}{\mT^{^o}_{\mG}}
\newcommand{\mET}{\mE_{\mT}^{^o}}
\newcommand{\mGM}{\mathpzc{G}}
\newcommand{\mEM}{\mathpzc{E}}
\newcommand{\gammab}{{\bm \gamma}}
\newcommand{\lambdab}{{\bm \lambda}}
\newcommand{\mS}{\mathrm{S}}
\newcommand{\tbeg}{n-1 }
\newcommand{\tend}{t^\ast}
\newcommand{\xib}{\bm{\xi}}
\newcommand{\betab}{\bm{\beta}}
\newcommand{\mub}{\bm{\mu}}
\newcommand{\Lambdab}{\bm{\Lambda}}
\newcommand{\Sigmab}{\bm{\Sigma}}
\newcommand{\Phib}{\bm{\Phi}}
\newcommand{\betamlefs}{{\betab}_{\textnormal{mle}}}
\newcommand{\betamapfs}{{\betab}_{\textnormal{map}}}
\newcommand{\emlefs}{\mathbf{E}_{\textnormal{mle}}}
\newcommand{\emapfs}{\mathbf{E}_{\textnormal{map}}}
\newcommand{\bmlefs}{\mathbf{b}_{\textnormal{mle}}}
\newcommand{\bmapfs}{\mathbf{b}_{\textnormal{map}}}
\newcommand{\mR}{\mathrm{R}}
\newcommand{\sigm}{\sigma_{\rm m}}
\DeclareFontFamily{OT1}{pzc}{}
\DeclareFontShape{OT1}{pzc}{m}{it}{<-> s * [1.200] pzcmi7t}{}
\DeclareMathAlphabet{\mathpzc}{OT1}{pzc}{m}{it}
\newcommand*\mcap{\mathbin{\mathpalette\mcapinn\relax}}
\newcommand*\mcapinn[2]{\vcenter{\hbox{$\mathsurround=0pt
			\ifx\displaystyle#1\textstyle\else#1\fi\bigcap$}}}
\newcommand*\mcupinn[2]{\vcenter{\hbox{$\mathsurround=0pt
			\ifx\displaystyle#1\textstyle\else#1\fi\bigcup$}}}
\newtheorem{proposition}{Proposition}
\newtheorem{theorem}{Theorem}
\newtheorem{lemma}{Lemma}
\newtheorem{definition}{Definition}
\newtheorem{remark}{Remark}
\title{\bf Dynamical Privacy in  Distributed Computing\\
	Part II:  PPSC Gossip Algorithms}
\date{}
\author{Yang Liu, Junfeng Wu, Ian Manchester, Guodong Shi\thanks{Y. Liu is with the Research School of Engineering, The Australian National University, Canberra 0200, Australia. (email: yang.liu@anu.edu.au).}%
\thanks{J. Wu is with   with the College of Control Science and Engineering, Zhejiang
University, Hangzhou 310027, China (e-mail: jfwu@zju.edu.cn)}%
\thanks{I. R. Manchester is with Australian Center for Field Robotics, The University of Sydney, NSW 2006, Australia. (email: ian.manchester@sydney.edu.au)}
\thanks{G. Shi is with Australian Center for Field Robotics, The University of Sydney, NSW 2006, Australia, and the Research School of Engineering, The Australian National University, Canberra 0200, Australia (email: guodong.shi@sydney.edu.au)}
 }
\begin{document}
\maketitle

\begin{abstract}
	In the first part of the paper, we have studied the computational privacy risks in distributed computing protocols against local or global dynamics eavesdroppers, and proposed   a Privacy-Preserving-Summation-Consistent (PPSC) mechanism   as a    generic privacy encryption subroutine for  consensus-based distributed computations.  In this part of this paper, we show that the conventional deterministic and random  gossip algorithms can be used to realize the PPSC mechanism over a given network. At each time step, a node is selected to interact with one of its neighbors via deterministic or random gossiping. Such node generates a random number as its new state,  and  sends the subtraction between its current state and that random number to the neighbor; then the neighbor updates its state by adding the received value to its current state. We establish  concrete privacy-preservation conditions by proving the impossibility  for the reconstruction of the network input from the output of the gossip-based  PPSC mechanism against eavesdroppers with full network knowledge, and by showing  that the PPSC mechanism can achieve differential privacy at arbitrary privacy levels.  The convergence is characterized  explicitly and analytically for both deterministic and randomized gossiping, which is essentially achieved in a finite number of steps. Additionally, we illustrate that the proposed algorithms can be easily made adaptive in real-world applications by making realtime trade-offs between resilience against  node dropout or communication failure and privacy preservation  capabilities.
\end{abstract}

\section{Introduction}

The development of distributed control and optimization  has become one of the central  streams in the study of complex network operations,  due to the rapidly growing volume and dimension of data and information flow from a variety of applications  such as social networking, smart grid, and intelligent transportation \cite{magnusbook,nedic09,nedic10,boyd2011distributed,xiaoming-formation,khan2009distributed}. Consensus algorithms serve as a basic tool for the information dissemination of distributed algorithms \cite{jad03,xiao04}, where  a network aim to achieve a common value related to the network initials by local interactions, usually just the average. This distributed nature of consensus algorithms allows for   optimization of sum of  local convex objective functions by gradient descent \cite{nedic10}, and  solving  linear algebraic equations over networks with equations allocated at the individual nodes   \cite{Mou-TAC-2015,Shi-TAC-2017}.
   The origin of this line of research can be traced back to the 1980s from the work of distributed optimization and decision making \cite{tsi} and parallel computations \cite{lynch}.

Under a distributed computing structure, the 
information about the network-wide computation problem is encoded in the individual node initial values or update rules, which are not shared directly among the nodes. Rather, nodes share certain dynamical states based on neighboring communications and local problem datasets, in order to find solutions of the global problem. The node dynamical states may contain sensitive private information directly or indirectly, but  in distributed computations to achieve the network-wide computation goals nodes   have to accept the fact that their privacy, or at least some of it, will be inevitably lost. It was pointed out that indeed nodes  lose their  privacy in terms of initial values if an attacker, a malicious user, or an eavesdropper knows part of the node trajectories and the network structure \cite{sundaram2007,yuan2013}. In fact, as long as observability holds, the whole network initial values become re-constructable with only a segment of node state updates at a few selected nodes.  Several insightful privacy-preserving consensus algorithms have been presented in the literature \cite{huang2012differentially,manitara2013privacy,mo2017privacy,nicolas2017,claudio2018} in the past few years, where the central idea is to inject  random noise or offsets in the node communication and iterations, or employ local maps as node state masks. The classical notion on differetial privacy in computer science has also been introdiuced to  distributed optimization problems   \cite{huang2015differentially,pappas2017}, following the earlier work on differentially private filtering \cite{pappas2014}.

In the first part of the paper, we have demonstrated the computational risks in distributed computation  algorithms and particularly in distributed linear equation solvers, and then defined the  
PPSC mechanism and illustrated its usefulness to enhance privacy in distributed computing. In this second part the paper, we focus on  the realization of PPSC mechanism  based on gossip algorithms, and the resulting privacy-preserving promises. A gossip process takes place over a network, where  a  pair of nodes is selected randomly or deterministically at any given time, and then this pair of nodes {\em gossip} in the sense that they exchane  information between each other as  a fundamental resource allocation protocol for computer networks \cite{Demers1987,Kempe2004}. The interaction process of a gossiping pair can be modeled as averaging, leading to  the definition of gossip algorithms \cite{kempe2003} serving as a description to an underlying gossip process or a distributed computing algorithm in its own right. Protocols based on gossiping have led to new distributed networking solutions in large-scale applications \cite{shah2008,boyd2006randomized,nicolas2016} and new models of social network evolutions \cite{doerr-2012}.

We show that standard gossip algorithms can be simply modified to realize PPSC mechanism in finite time steps. The main results for Part II of this paper is summarized as follows. 
\begin{itemize}
	\item Gossip-based PPSC algorithms as realizations of PPSC mechanism are proposed. At each time step,  one node is selected to  initialize an interaction with its neighbor via deterministic or random gossiping. This node generates a random number as its new state,  and  sends the subtraction between its current state and that random number to the neighbor. Then the neighbor updates its state by adding the received value to its current state. 
	
	\item The resulting PPSC mechanism is shown to have proven privacy guarantee in the sense that upon observing the output of the mechanism, the input is not identifiable. Moreover, we can always design the setup of the gossip algorithm to achieve differential privacy at arbitrary privacy budge levels for the PPSC mechanism. 
	
	\item The convergence rate and convergence limits of both the deterministic and randomized PPSC algorithms are explicitly established. 
\end{itemize}
We note that the proposed  PPSC mechanism has the following dinstinct feature compared to exisiting work on privacy preservation \cite{huang2012differentially,manitara2013privacy,mo2017privacy,nicolas2017,claudio2018}: even if we completely eliminate the added random numbers, the outcome of the mechanism continues to provide a degree of network privacy protection in the sense of  non-identifiability. In fact, we show that the random numbers primarily ensure local privacy protection and further enhance network privacy preservation measured by differential privacy.

A preliminary version of the results are to be reported at IEEE Conference on Decision and Control in Dec. 2018 \cite{cdc1,cdc2}. In the current manuscript, we have established a series of   new results on privacy loss characterizations and privacy preservation quantifications, in addition to a few new illustration  examples and technical proofs. 

 The remainder of the paper is organized as follows. In Section \ref{sec:ppsc}, we recall the definition of the PPSC mechanism and some preliminary knowledge on graph theory.   The deterministic privacy-preserving algorithm is proposed and investigated  in Section \ref{sec:deter}.  The randomized privacy-preserving algorithms are presented in Section \ref{sec:rand}, in which the trade-off between resilience and privacy preservation is also analyzed. Some concluding remarks are finally drawn in Section \ref{sec:conclusion}.

\section{Preliminaries}\label{sec:ppsc}
\subsection{PPSC Mechanism}
Consider a network of $n$ nodes  indexed in the set $\mathrm{V}=\{1,\dots,n\}$ and interconnected according to a  graph $\mathrm{G}=(\mathrm{V},\mathrm{E})$, where $\mathrm{E}$ is a set of bidirectional edges between the nodes in $\mathrm{V}$. Each node $i\in\mV$ holds an initial state $\beta_i\in\R^r$.  An algorithm running over the node set $\mathrm{V}$ is called a distributed Privacy-Preserving-Summation-Consistent (PPSC) algorithm, which produces output $\betab^\sharp=(\beta_1^{\sharp}\ \dots\ \beta_n^{\sharp})^\top$ from the network input $\betab=(\beta_1\ \dots\ \beta_n)^\top$, if the following four conditions hold:
\begin{itemize}
	\item [(i)] {\em (Graph Compliance)} Each node $i$ communicates only  with its neighbors in the set $\mathrm{N}_i:=\big\{j:\{i,j\}\in\mathrm{E}\big\}$;
	\item[(ii)] {\em (Local Privacy Preservation)} Each node $i$ never reveals its initial value $\beta_i$ to any other agents or a third party;
	\item[(iii)] {\em (Network Privacy Preservation)} $\betab$ is non-identifiable given $\betab^\sharp$;
	\item[(iv)] {\em (Summation Consistency)} $\sum\limits_{i=1}^n \beta^{\sharp}_i=\sum\limits_{i=1}^n \beta_i$.
\end{itemize}

\subsection{Graph Theory Preliminaries}
Consider an undirected graph $\mG=(\mV,\mE)$.
Nodes $i$ and $j$ are adjacent if $\{i,j\}\in\mE$.
A sequence of distinct nodes $i_0,i_1,\dots,i_l$ is said to be a path of length $l\ge 1$ between node $i_0$ and $i_l$ if $i_j$ and $i_{j+1}$ are adjacent for all $j=0,1,\dots,l-1$.
A graph $\mG$ is  connected if there exists at least one path between $i$ and $j$ for any $i\neq j\in\mV$.
A spanning subgraph of $\mathrm{G}$ is defined as a graph with its node set being $\mV$ and its edge set being a subset of $\mE$. Then we say $\mathrm{T}_{\mG}=(\mathrm{V},\mathrm{E}_{\mT})$ is a spanning tree of connected $\mathrm{G}=(\mathrm{V},\mathrm{E})$ if $\mathrm{T}_{\mG}$ is a spanning subgraph of $\mathrm{G}$ and is a tree. An orientation over the edge set $\mE$ is  a mapping  $o:\mE\to\{-1,1\}$ with $o(i,j)=-o(j,i)$ for all $\{i,j\}\in\mathrm{E}$. A directed edge, denoted by an ordered pair of nodes $(i,j)$, is generated under this orientation $o$ if $o(i,j)=1$. Particularly, $i$ is the tail of the directed edge $(i,j)$, denoted by $\Tail\big((i,j)\big)$; and $j$ is the head denoted by $\Head\big((i,j)\big)$. The graph $\mG$ with an orientation $o$ results in a directed graph $\mG^{^o}=(\mV,\mE^{^o})$ with $\mE^{^o}=\{(i,j):o(i,j)=1,\ \forall\{i,j\}\in\mE\}$, and in turn a sequence of distinct nodes $i_0,i_1,\dots,i_l$ is said to be a directed path of length $l\ge 1$ if $(i_j,i_{j+1})\in\mE^{^o}$ for all $j=0,1,\dots,l-1$. We refer to \cite{godsil2013} for more details on graph theory.

\section{Deterministic PPSC Gossiping}\label{sec:deter}
In this section, we present a deterministic  gossip-based PPSC algorithm. To carry out a deterministic gossiping  process over a network, one has to arrange the sequence of node interactions, which is often a hard constraint in practical environments.  However, the study of deterministic gossip protocols can eliminate  the randomness of the gossiping process, and therefore allow for analysis  focusing  on the inherent randomness of the node updates themselves. Therefore, deterministic gossip algorithms often serve well as benchmarks for the performance of the general category of gossip algorithms \cite{shi-tit-2015}.

\subsection{The Algorithm}
We assume $r=1$ to simplify the discussion, i.e., $\beta_i \in \mathbb{R}$ without loss of generality,  because for $\beta_i\in\mathbb{R}^r$ with $r>1$,  the presented algorithm   can run component-wise along each entry.  Let  $\mathrm{T}_{\mG}=(\mV,\mE_{\mT})$ be any spanning tree of $\mG$. Then we assign an arbitrary  orientation $o$ to $\mathrm{T}_{\mG}$ so that an oriented tree $\mTG=(\mV,\mET)$  is obtained.   We sort the directed edges in $\mET$ by $\asf_1,\asf_2,\dots,\asf_{n-1}$ with an arbitrary order. Let $\gammab_t, \ t=1,2,\dots$ be independently and identically distributed (i.i.d.) random variables.
\begin{algorithm}[htb]
	{$\mathbf{Deterministic\textnormal{ }PPSC\textnormal{ }Gossip\textnormal{ }Algorithm}$}\\
	Set $x_i(0)\gets\beta_i$ for each $i\in\mV$. For $t=1,\dots,n-1 $, repeat the following steps.
	\begin{algorithmic}[1]
		\STATE Node $\Tail(\asf_t)$ randomly and independently generates $\gammab_t\in\R$ according to some distribution with  mean  $\varrho_\gammab$ and  variance $\sigma_{\gammab}^2>0$.
		\STATE Node $\Tail(\asf_t)$ computes $\omega_t=x_{\Tail(\asf_t)}(t-1)-\gammab_t$ and sends $\omega_t$ to $\Head(\asf_t)$.
		\STATE Node $\Tail(\asf_t)$ updates its state by $x_{\Tail(\asf_t)}(t)=\gammab_t$, node $\Head(\asf_t)$ updates its state by $x_{\Head(\asf_t)}(t)=x_{\Head(\asf_t)}(t-1)+\omega_t$, and each node $i\in\mV\setminus\{\Tail(\asf_t),\Head(\asf_t)\}$ sets $x_i(t)=x_i(t-1)$.
	\end{algorithmic}
\end{algorithm}

After $n-1$ rounds, this deterministic PPSC (D-PPSC) gossip algorithm produces 
$$
\beta_i^\sharp=x_i(n-1 ), \ i=1,\dots,n
$$
from the initial condition $x_i(0)=\beta_i,\ i=1,\dots,n$. At each time $t$, only node $\Tail(\asf_{t})$ sends
$
\omega_t = x_{\Tail(\asf_{t})}(t-1) - \gammab_t
$
to node $\Head(\asf_{t})$. Therefore, we can verify   that the D-PPSC-Gossip algorithm satisfies the Graph Compliance and Local Privacy Preservation conditions. The Summation Consistency condition can also be easily verified since the sum of the node states for the interacting pair in a step never changes.

The D-PPSC-Gossip algorithm can have a simple algebraic representation. We define $\Phib_k\in\R^{n\times n}$ for $k=1,2,\dots,n-1 $ by $[\Phib_k]_{ij}=1$ {if} $i=\Head(\asf_k)\textnormal{ and }j=\Tail(\asf_k)$; $[\Phib_k]_{ij}=-1$ {if} $i=\Tail(\asf_k)\textnormal{ and }j=\Tail(\asf_k)$; and  $[\Phib_k]_{ij}=0$ otherwise.  
Then along the D-PPSC-Gossip algorithm, there holds
\begin{equation}\label{eq:linear_noninitial_initial}
\begin{aligned}
\xb(1)&=\Ab_1\betab+\gammab_1\vb_1,\\
&\vdots\\
\xb(n-1 )&=\Ab_{n-1 }\xb(n-1 -1)+\gammab_{n-1 }\vb_{n-1 },
\end{aligned}
\end{equation}
where $\Ab_k=\Phib_k+\Ib$, and $\vb_k$ is  the all-zeros vector except for its $\Tail(\asf_k)$-th component being $1$ and its $\Head(\asf_k)$-th component being $-1$ for $k=1,\dots,n-1 $. Introduce $$\Cb=\Ab_{n-1 }\cdots\Ab_1 
$$ and $$
\Db=[\Ab_{n-1 }\cdots\Ab_2\vb_1,\ \Ab_{n-1 }\cdots\Ab_3\vb_2,\ \dots,\ \vb_{n-1 }]  
$$ Denote $\gammab=[\gammab_1\ \dots\ \gammab_{n-1 }]^\top$. Then (\ref{eq:linear_noninitial_initial}) yields
\begin{equation}\label{eq:linear_final_initial}
\betab^\sharp=\mathscr{P}_{\rm D-PPSC}(\betab)=\Cb\betab+\Db\gammab
\end{equation}
where $\Cb\in \mathbb{R}^{n\times n}$, $\Db\in \mathbb{R}^{n\times (n-1)}.$

The matrices $\mathbf{A}_k$ have very special structures, and in turns out that  we can prove from such structures  that $\mathbf{C}$ is always singular. As a result, even if $\gammab=0$, i.e., no randomness is injected to $\betab^\sharp$, knowing $\mathbf{C}$ and $\betab^\sharp$ is not enough to determine $\betab$. As a result, the PPSC mechanism by itself provides a degree of network privacy preservation. On the other hand,  obviously the existence of $\gammab$ makes the local privacy preservation  possible. We will further  show that $\gammab$   enhances the network privacy protection measured under differential privacy.

\begin{remark}{\rm 
Note that the D-PPSC-Gossip algorithm can run the same way over an arbitrary number of sorted  edges with orientation, not necessarily relying on a spanning tree and the resulting $n-1$ steps. The spanning tree and the $(n-1)$-step setup are just to make sure that all nodes have updated their input values, which are not essential to the privacy protection as shown in the analysis below.  }
\end{remark}


\subsection{Global Dynamics Privacy}
Recall that  $\betab=[\beta_1\ \dots\ \beta_n]^\top$ and $\betab^\sharp=[\beta_1^\sharp\ \dots\ \beta_n^\sharp]^\top$. Let the randomized  mechanism mapping $\betab$ to $\betab^\sharp$ along the  D-PPSC-Gossip algorithm
be denoted by 
\begin{align}
\betab^\sharp= \mathscr{P}_{\rm D-PPSC} (\betab). 
\end{align}
\begin{theorem}\label{thm0}
	The mechanism  $\mathscr{P}_{\rm D-PPSC}$ by the D-PPSC-Gossip algorithm preserves network privacy in the sense that $\betab$ is always non-identifiable,  by eavesdroppers knowing the network topology $\mG$, the spanning tree $\mT_\mG$, the oriented spanning tree $\mTG$,   the edge sequence $(\asf_1,\dots,\asf_{n-1 })$, and the statistics or even distribution  of the $\gammab_t$. 
\end{theorem}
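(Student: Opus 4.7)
The plan is to exploit the affine representation $\betab^\sharp=\Cb\betab+\Db\gammab$ from (\ref{eq:linear_final_initial}) and show that the deterministic matrix $\Cb$ is always singular. Once this is in hand, any nonzero vector $\kb\in\ker(\Cb)$ gives two distinct inputs $\betab$ and $\betab+\kb$ whose induced distributions on $\betab^\sharp$ coincide, which is exactly the non-identifiability statement.

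The first and main step is a structural observation about the one-step matrix $\Ab_k=\Phib_k+\Ib$. By the definition of $\Phib_k$, its only nonzero entries are a $-1$ at position $(\Tail(\asf_k),\Tail(\asf_k))$ and a $+1$ at $(\Head(\asf_k),\Tail(\asf_k))$, so in row $\Tail(\asf_k)$ only the diagonal entry of $\Phib_k$ is nonzero. Adding $\Ib$ cancels this diagonal $-1$ while leaving the off-diagonal zeros untouched, so the entire $\Tail(\asf_k)$-th row of $\Ab_k$ vanishes---this is simply the matrix incarnation of the reset $x_{\Tail(\asf_k)}(t)=\gammab_t$, which discards every piece of prior information at the tail node. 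Hence $\det(\Ab_k)=0$ for every $k$, so $\det(\Cb)=\prod_{k=1}^{n-1}\det(\Ab_k)=0$ and $\ker(\Cb)$ contains a nonzero vector $\kb$. Comparing the two randomized outputs, $\mathscr{P}_{\rm D-PPSC}(\betab+\kb)=\Cb(\betab+\kb)+\Db\gammab=\Cb\betab+\Db\gammab=\mathscr{P}_{\rm D-PPSC}(\betab)$ as random vectors driven by the same $\gammab$, so the output distributions induced by $\betab$ and $\betab+\kb$ are identical. Consequently no inference based on $\betab^\sharp$---even one equipped with full knowledge of $\mG$, $\mT_\mG$, $\mTG$, the edge ordering $(\asf_1,\dots,\asf_{n-1})$, and the distribution of $\gammab_t$---can distinguish $\betab$ from $\betab+\kb$, which is the non-identifiability claimed.

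I do not anticipate a serious obstacle. The only substantive step is the observation that each $\Ab_k$ has a zero row, and this falls out of inspecting $\Phib_k+\Ib$ directly; the passage from singularity of each $\Ab_k$ to singularity of the product $\Cb$ is immediate via determinants, and the passage from $\ker(\Cb)\neq\{0\}$ to non-identifiability is a one-line affine-shift argument. The only subtlety worth flagging is pinning down what ``non-identifiable'' means in the randomized setting---namely, that two distinct inputs produce identically distributed outputs---but this coincides with the standard statistical usage and is clearly what the setting of the theorem demands.
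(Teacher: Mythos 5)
Your proof is correct and follows essentially the same route as the paper's: both establish that $\Cb$ is singular from the zero-row structure caused by the tail-node reset and then conclude non-identifiability by shifting the input along a nonzero kernel vector, which leaves the output distribution unchanged. The paper simply reads off the zero row of $\Cb$ directly (its $\Tail(\asf_{n-1})$-th row, inherited from the last factor $\Ab_{n-1}$), whereas you argue via $\det(\Ab_k)=0$ for every $k$ and multiplicativity of the determinant; the two observations are interchangeable.
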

The proof of Theorem   \ref{thm0} can be found in the appendix. 
There are several points worth emphasizing. 
\begin{itemize}
	\item  The direct implication of Theorem \ref{thm0} is that the mechanism  $\mathscr{P}_{\rm D-PPSC}$ by the D-PPSC-Gossip algorithm  is privacy preserving by that for any fixed $\betab$, it is impossible to recover  $\betab$ even with an infinite number of realization of  $\mathscr{P}_{\rm D-PPSC} (\betab)=\betab^\sharp$ for a fixed spanning tree structure.
	
	\item In the meantime,   any realization $\mathscr{P}_{\rm D-PPSC} (\betab)=\betab^\sharp$   contains information about $\betab$. For example, with one single realization one immediately learns $\mathbf{1}^\top \betab =\sum_{i=1}^n \beta
	_i$. This is inevitable because otherwise  $\mathscr{P}_{\rm D-PPSC} (\betab)$ will not be useful at all for computation based on $\betab$.
	\item Such recovering impossibility holds true for eavesdroppers having the full knowledge of the D-PPSC-Gossip algorithm setup: the network topology $\mG$, the spanning tree $\mT_\mG$, the oriented spanning tree $\mTG$,   the edge sequence $(\asf_1,\dots,\asf_{n-1 })$, and the statistics or even distribution  of the $\gammab_t$. Missing any piece of such knowledge will make it much harder to learn more information about $\betab$ from $\betab^\sharp$.  
\end{itemize}
We also remark that nodes may certainly  lose privacy in terms of $\betab$ during the recursion of the D-PPSC-Gossip algorithm against eavesdroppers that have access to the node states $x_i(t)$ and communication packets $\omega_t$. However, first of all, one can ensure the security of  the PPSC algorithm so that the output  $\betab^\sharp$ can be made public as the privacy of $\mathscr{P}_{\rm D-PPSC}$ is guaranteed by Theorem \ref{thm0}. Secondly, the privacy of the PPSC mechanism can be significantly improved by a randomized gossiping process, which will be discussed later. A randomized PPSC algorithm will be able to hide   the oriented spanning tree $\mTG$ and  the edge sequence $(\asf_1,\dots,\asf_{n-1 })$ through simple randomization.

\subsection{Differential Privacy}

We introduce the following definition on the differential privacy of the D-PPSC-Gossip algorithm. 
\begin{definition}
	(i) Two inputs for the PPSC mechanism $\betab,\betab^\prime\in\R^n$ are  $\delta$--adjacent if $\1^\top\betab=\1^\top\betab^\prime$ and there exists a unique $i\in\mV$ such that (i) $\left|\beta_i-\beta^\prime_i\right|\le\delta$; (ii) $\beta_j=\beta^\prime_j$ for all $j\neq i$.
	
	(ii)
	The D-PPSC-Gossip algorithm preserves $\epsilon $--differential privacy under $\delta$-adjacency  if for all $\mR\subset{\rm range}(\mathscr{P}_{\rm D-PPSC})$, there holds
	$$\Pr\big(\mathscr{P}_{\rm D-PPSC}(\betab)\in\mR\big)\le e^\epsilon\cdot\Pr\big(\mathscr{P}_{\rm D-PPSC}(\betab^\prime)\in\mR\big)$$
	for two arbitrary $\delta$--adjacent initial conditions $\betab,\betab^\prime\in\R^n$.
\end{definition}

Let $\Sigmab_{\betab^\sharp}$ be the covariance matrix\footnote{We view $\betab$ as a deterministic input here and in fact for the majority part of the paper, i.e., the randomness for this covariance comes entirely from the random numbers $\gamma_t$. } of $\betab^\sharp=\mathscr{P}_{\rm D-PPSC}(\betab)$. We define $\mGM_{\betab^\sharp}$ as the graphical model of $\betab^\sharp$ with $\mGM_{\betab^\sharp}=(\mV,\mEM_{\betab^\sharp})$, where $\mEM_{\betab^\sharp}=\big\{\{i,j\}:i\neq j,\ [\Sigmab_{\betab^\sharp}]_{ij}\neq 0\big\}$. Note that this graphical model $\mGM_{\betab^\sharp}$ does not depend on specific values of $\betab$. We will establish shortly in the next subsection the structural property of this graphical model $\mGM_{\betab^\sharp}$. 

Let $\Delta(\cdot)$ denote the maximum degree of a graph, and $\sigm(\cdot)$ denote the smallest eigenvalue in absolute value of a real symmetric matrix. We establish the following result which further quantifies the degree of network privacy preservation by the D-PPSC-Gossip algorithm. 
\begin{theorem}\label{thmdp}
	Suppose $\gammab_t$ with $t=1,\dots,n-1 $ are i.i.d. Laplace random variables with variance $2v^2,\ v>0$. Then the   mechanism $\mathscr{P}_{\rm D-PPSC}$ is $\epsilon$-differentially private under $\delta$-adjacency if 
	$$
	\frac{\delta\sqrt{n-1}\Delta(\mGM_{\betab^\sharp})}{v\left|\sigm(\Db^\top\Db)\right|}\le\epsilon.  
	$$
\end{theorem}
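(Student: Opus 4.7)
The plan is to analyze the likelihood ratio of $\betab^\sharp$ under two $\delta$-adjacent inputs $\betab,\betab'$, reducing the differential privacy guarantee to an $\ell_1$-sensitivity calculation on the Laplace noise $\gammab$ driving the mechanism \eqref{eq:linear_final_initial}. First I would establish that both $\mathscr{P}_{\rm D-PPSC}(\betab)$ and $\mathscr{P}_{\rm D-PPSC}(\betab')$ are supported on the same $(n-1)$-dimensional affine subspace $\Cb\betab+\colsp(\Db)$: the adjacency constraint $\1^\top\betab=\1^\top\betab'$ combined with $\1^\top\Cb=\1^\top$ (each $\Ab_k$ preserves column sums) and $\1^\top\Db=0$ yields $\colsp(\Db)=\1^\perp$. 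Because the $n-1$ incidence vectors $\vb_k$ of the spanning tree are linearly independent and subsequent $\Ab_j$ operations merely transport---rather than amplify---their nonzero entries, $\Db$ turns out to have full column rank $n-1$, with each of its columns carrying exactly two nonzero entries taking the values $+1$ and $-1$. The Moore--Penrose left-inverse $\Db^\dagger=(\Db^\top\Db)^{-1}\Db^\top$ is therefore well defined, and pushing the i.i.d.\ Laplace density of $\gammab$ through the map $\gammab\mapsto\Cb\betab+\Db\gammab$ yields the density of $\betab^\sharp$ on this subspace, proportional to $\exp(-\|\Db^\dagger(\betab^\sharp-\Cb\betab)\|_1/v)$.

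Writing $\wb=\betab-\betab'$ and applying the reverse triangle inequality to the log-density difference, I obtain
\[
\Pr\bigl(\mathscr{P}_{\rm D-PPSC}(\betab)\in\mR\bigr) \le \exp\!\left(\frac{\|\Db^\dagger\Cb\wb\|_1}{v}\right)\Pr\bigl(\mathscr{P}_{\rm D-PPSC}(\betab')\in\mR\bigr),
\]
so it suffices to show $\|\Db^\dagger\Cb\wb\|_1\le v\epsilon$. Cauchy--Schwarz gives $\|\Db^\dagger\Cb\wb\|_1\le\sqrt{n-1}\,\|\Db^\dagger\Cb\wb\|_2$. Setting $\zb=\Db^\dagger\Cb\wb$, the inclusion $\Cb\wb\in\1^\perp=\colsp(\Db)$ (note that $\1^\top\wb=0$) implies $\Db\zb=\Cb\wb$, and multiplying by $\Db^\top$ yields $(\Db^\top\Db)\zb=\Db^\top\Cb\wb$. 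The smallest-eigenvalue bound on the symmetric positive-definite matrix $\Db^\top\Db$ then gives $\|\zb\|_2 \le \|\Db^\top\Cb\wb\|_2/|\sigm(\Db^\top\Db)|$.

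The remaining and main technical step is the structural inequality $\|\Db^\top\Cb\wb\|_2\le\delta\,\Delta(\mGM_{\betab^\sharp})$, which is where the graphical-model structure enters and where I expect the main obstacle. Two features of the D-PPSC dynamics are central: (i) each column of $\Cb$ has a single nonzero entry equal to $1$, because each initial value $\beta_i$ is transported through $\asf_1,\ldots,\asf_{n-1}$ to a unique final holder node, so $\Cb\wb$ has support of size at most two with $\|\Cb\wb\|_\infty\le\delta$; (ii) the covariance $\Db\Db^\top=\Sigmab_{\betab^\sharp}/(2v^2)$ inherits its sparsity from the pair-structure of the columns of $\Db$, so the maximum degree $\Delta(\mGM_{\betab^\sharp})$ governs the absolute row sums of $\Db\Db^\top$. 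Expanding $\|\Db^\top\Cb\wb\|_2^2=(\Cb\wb)^\top(\Db\Db^\top)(\Cb\wb)$ as a quadratic form and bounding it through the maximum absolute row sum of $\Db\Db^\top$ together with the support bound on $\Cb\wb$ produces the required estimate. Composing the four inequalities yields $\|\Db^\dagger\Cb\wb\|_1\le\delta\sqrt{n-1}\,\Delta(\mGM_{\betab^\sharp})/|\sigm(\Db^\top\Db)|$, which under the theorem's hypothesis is at most $v\epsilon$, thereby establishing $\epsilon$-differential privacy.
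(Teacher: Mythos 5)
Your proof follows the same skeleton as the paper's: write the likelihood ratio through the left inverse $(\Db^\top\Db)^{-1}\Db^\top$ (full column rank of $\Db$ is the paper's Lemma~\ref{lem:D}), bound the log-ratio by $\|(\Db^\top\Db)^{-1}\Db^\top\Cb\wb\|_1/v$ with $\wb=\betab-\betab'$, pick up $\sqrt{n-1}$ from the $\ell_1$--$\ell_2$ comparison, and extract $1/\left|\sigm(\Db^\top\Db)\right|$ from the smallest eigenvalue. The only divergence is the final structural estimate, and there the paper is more elementary than what you propose: it simply uses the induced-norm chain $\|\Db^\top\Cb\wb\|_2\le\|\Db^\top\Cb\wb\|_1\le\|\Db^\top\|_1\|\Cb\|_1\|\wb\|_1$, noting that $\|\Db^\top\|_1$ (the maximal number of $\pm1$ entries in a row of $\Db$, i.e.\ of random components surviving at a node) equals $\Delta(\mGM_{\betab^\sharp})$ and that every column of $\Cb$ is a standard basis vector so $\|\Cb\|_1=1$; with $\|\wb\|_1\le\delta$ this gives $\delta\Delta(\mGM_{\betab^\sharp})$ at once, with no need for the covariance/Laplacian machinery. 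Your alternative, expanding $\|\Db^\top\Cb\wb\|_2^2=(\Cb\wb)^\top(\Db\Db^\top)(\Cb\wb)$ with $\Db\Db^\top=\Sigmab_{\betab^\sharp}/(2v^2)$ the tree Laplacian of Theorem~\ref{thm:Sigma}, is workable but as stated the constants do not close: under your hypotheses (support of $\Cb\wb$ of size two, $\|\Cb\wb\|_\infty\le\delta$) a row-sum bound on the Laplacian only yields $(\Cb\wb)^\top(\Db\Db^\top)(\Cb\wb)\le 2\delta^2(\Delta+1)$, i.e.\ $\|\Db^\top\Cb\wb\|_2\le\delta\sqrt{2(\Delta+1)}$, which exceeds the target $\delta\Delta$ when $\Delta\in\{1,2\}$ (e.g.\ two adjacent support nodes with entries $\pm\delta$). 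To recover exactly the theorem's constant you need the stronger reading $\|\Cb\wb\|_1\le\|\wb\|_1\le\delta$, which is what the paper implicitly uses; granted that, your quadratic-form route is fine and in fact sharper for large $\Delta$ (order $\delta\sqrt{\Delta}$ instead of $\delta\Delta$). One further soft spot: your justification that $\Db$ has full column rank ("the $\Ab_j$ merely transport the entries") does not by itself exclude two different $\gammab_s,\gammab_{s'}$ ending up at the same ordered pair of holder nodes, which would make two columns of $\Db$ identical; the paper proves Lemma~\ref{lem:D} by repeatedly peeling degree-one nodes of the tree $\mGM_{\betab^\sharp}$, and some argument of that kind (or an appeal to Lemma~\ref{lem:final_node_states} plus the tree property of Theorem~\ref{thm:Sigma}) is needed to make this step rigorous.
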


Theorem \ref{thmdp} shows that the PPSC mechanism can provide $\epsilon$-differential privacy  for all $\epsilon$ as long as the variance of the $\gammab_t$ is sufficiently large. Since the sum is consistent regardless of the choice of $\gammab_t$, the computation accuracy can also be guaranteed. Together with Theorem \ref{thm0}, it can be seen that $\mathscr{P}_{\rm D-PPSC}(\betab)$ provides each node with strong deniability about holding $\beta_i$ upon observing $\betab^\sharp$. The proof of Theorem   \ref{thmdp} can be found in the appendix.

\begin{remark}{\rm 
The generated  PPSC mechanism  $\mathscr{P}_{\rm D-PPSC} $  certainly depends on the choice of the tree  $\mathrm{T}_{\mG}$. The dependence is encoded in the representations of $\mathbf{C}$, $\mathbf{D}$, and the graphical model $\mGM_{\betab^\sharp}$, which in turn affects the privacy preserving degree of  $\mathscr{P}_{\rm D-PPSC} $. Therefore, we may further optimize   the choice of   $\mathrm{T}_{\mG}$ for the $\mathscr{P}_{\rm D-PPSC} $ at the synthesis stage of the algorithm. }
\end{remark}

\subsection{Algorithm  Output Statistics}
 It is of interest to understand the statistical relations between the different  $\beta^\sharp_i$'s under the mechanism $\mathscr{P}_{\rm D-PPSC}$.  We denote the sequantial order of two directed edges  in $\mathrm{T}_{\mG}$ by $\asf_s\prec\asf_t$ if $s<t$ and $\asf_s\preceq\asf_t$ if $s\le t$, and write $\min(\asf_s,\asf_t)=\asf_{\alpha}$ with $\alpha=\min(s,t)$, and $\max(\asf_s,\asf_t)=\asf_{\alpha}$ with $\alpha=\max(s,t)$.
The following theorem proposes a necessary and sufficient condition for the dependence of two nodes' final states in the case when the two nodes are not directly connected by a path.

\medskip

\begin{theorem}\label{thm:dependence}
	Suppose there exists no directed path in $\mTG$ connecting node $i$ and node $j$. Let $i_0,i_1,\dots,i_l$ denote the unique undirected path in $\mathrm{T}_{\mG}$ that connects node $i$ and node $j$ with $i=i_0$ and $j=i_l$. Then along the D-PPSC-Gossip algorithm, $\beta^\sharp_{i_0}$ and $\beta^\sharp_{i_l}$ are dependent if and only if there exists $0<p<l$ such that the following conditions hold.
	\begin{enumerate}[(i)]
		\item $i_p,i_{p-1},\dots,i_{0}$ and $i_p,i_{p+1},\dots,i_l$ are both directed paths.
		\item There hold $(i_p,i_{p-1})\prec\dots\prec(i_1,i_0)$ and $(i_p,i_{p+1})\prec\dots\prec(i_{l-1},i_l)$.
		\item \begin{itemize}
			\item[a)]If $(i_0,i^\ast)\in\mET$, then $(i_0,i^\ast)\prec(i_0,i_{1})$;
			\item[b)]If $(i_b,i^\ast)\in\mET$ for $0<b<p$, then $(i_b,i^\ast)\prec(i_{b+1},i_{b})\textnormal{ or }(i_b,i_{b-1})\preceq(i_b,i^\ast)$;
			\item[c)] If $(i_p,i^\ast)\in\mET$, then $$(i_p,i^\ast)\prec\min\big((i_p,i_{p-1}),(i_p,i_{p+1})\big)$$ or $$\max\big((i_p,i_{p-1}),(i_p,i_{p+1})\big)\preceq(i_p,i^\ast);$$
			\item[d)] If $(i_b,i^\ast)\in\mET$ with $p<b<l$, then $(i_b,i^\ast)\prec(i_{b-1},i_{b})\textnormal{ or }(i_b,i_{b+1})\preceq(i_b,i^\ast)$;
			\item[e)] If $(i_l,i^\ast)\in\mET$, then
			$(i_l,i^\ast)\prec(i_{l-1},i_l)$.
		\end{itemize}
	\end{enumerate}

	
	
	
	
\end{theorem}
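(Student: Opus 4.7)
The plan is to use the linear representation $\betab^\sharp = \Cb\betab + \Db\gammab$ from (\ref{eq:linear_final_initial}). Since the $\gammab_t$ are i.i.d.\ with positive variance and $\Cb\betab$ is deterministic, $\beta^\sharp_{i_0}$ and $\beta^\sharp_{i_l}$ are statistically dependent if and only if there exists some $k \in \{1,\dots,n-1\}$ with both $[\Db]_{i_0,k}$ and $[\Db]_{i_l,k}$ nonzero. The theorem therefore reduces to characterizing those $k$ for which $\gammab_k$ simultaneously affects both coordinates.

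To compute $[\Db]_{i,k}$ I would track the coefficient of $\gammab_k$ in the node states as the dynamics in (\ref{eq:linear_noninitial_initial}) evolve. At time $k$, with $\asf_k = (u,v)$, the coefficient vector starts with $+1$ at $u = \Tail(\asf_k)$, $-1$ at $v = \Head(\asf_k)$, and zero elsewhere. For each later step $s > k$ with $\asf_s = (u',v')$, the matrix $\Ab_s$ adds whatever sits at $u'$ into the entry at $v'$ and zeros out $u'$. By linearity I interpret the $+1$ coefficient as a unit token at $u$ and the $-1$ coefficient as a unit token at $v$; each token moves independently under the deterministic rule that, from its current location $w$, it jumps to the head of the next outgoing edge $(w,\cdot) \in \mET$ to fire, and stops once no such outgoing edge remains in the future sequence. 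Then $[\Db]_{i,k}$ equals the signed sum of the two token indicators at $i$, so $[\Db]_{i_0,k}$ and $[\Db]_{i_l,k}$ are simultaneously nonzero precisely when the two tokens end at the set $\{i_0,i_l\}$, one at each.

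Because $\mT_\mG$ is a tree and every edge fires exactly once, after $\asf_k$ fires the two tokens are permanently confined to the two components of $\mT_\mG$ obtained by deleting $\asf_k$. Since the unique undirected path from $i_0$ to $i_l$ is $i_0,i_1,\dots,i_l$, separating $i_0$ and $i_l$ between these components forces $\asf_k$ to be an edge on this path. Letting $i_p$ denote the tail of $\asf_k$ in $\mTG$, the $+1$ token must traverse a directed subpath from $i_p$ to one of $i_0, i_l$, while the $-1$ token at the head of $\asf_k$ traverses a directed subpath to the other endpoint. The existence of these two outgoing directed subpaths from $i_p$ is exactly condition (i), and the required outward-from-$i_p$ firing order along each subpath so that each token can advance step by step is exactly condition (ii). The hypothesis that no directed path connects $i_0$ and $i_l$ rules out $p \in \{0, l\}$, confirming $0 < p < l$.

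Finally, the ``first-firing outgoing edge'' nature of the token jumps turns into non-sidetracking constraints at every vertex of the two subpaths. At an interior vertex $i_b$ of a branch, the intended next path edge must be the first outgoing edge from $i_b$ to fire after the token's arrival, so every other $(i_b, i^\ast) \in \mET$ must fire before the arrival or after the departure; this yields (iii)(b) for $0<b<p$ and (iii)(d) for $p<b<l$. At the endpoints $i_0$ and $i_l$ the arrived token must stay put, forcing every other outgoing edge there to fire before arrival, which is (iii)(a) and (iii)(e). At the branching vertex $i_p$ the path edge opposite to $\asf_k$ must be the first outgoing edge from $i_p$ to fire after $\asf_k$, and the min--max disjunction in (iii)(c) uniformly captures both subcases depending on whether $\asf_k$ is $(i_p,i_{p-1})$ or $(i_p,i_{p+1})$. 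For sufficiency, given any $p$ satisfying (i)--(iii), I would take $\asf_k$ to be the earlier-firing edge between $(i_p,i_{p-1})$ and $(i_p,i_{p+1})$ and verify via the token rule that the two tokens indeed land at $i_0$ and $i_l$. I expect the main technical obstacle to be the disciplined bookkeeping at the branching vertex $i_p$, in particular verifying that the min--max disjunction in (iii)(c) captures exactly the admissible configurations without introducing spurious ones.
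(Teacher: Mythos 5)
Your proposal is correct and takes essentially the same route as the paper: your ``token'' dynamics on the columns of $\Db$ is precisely the paper's random-component transfer argument (Lemma~\ref{lem:RV_transfer}), and your identification of the splitting edge $\asf_k$ on the tree path with tail $i_p$, the outward firing orders, and the non-diversion constraints at path vertices reproduce the paper's necessity and sufficiency steps built on Lemmas~\ref{lem:RV_transfer}--\ref{lem:final_node_states}. The only step to make explicit is that the reduction ``dependent iff a shared column of $\Db$'' relies on each column of $\Db$ having exactly one $+1$ and one $-1$ (so every shared column contributes covariance $-\sigma_{\gammab}^2$ and no cancellation can occur), a fact your confinement-of-tokens observation indeed delivers and which the paper records in Lemma~\ref{lem:final_node_states}.
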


\medskip

{Intuitively Condition (i) and Condition (ii) in Theorem \ref{thm:dependence} constrain the directions and selection sequences of the oriented edges on the path $i_0,\dots,i_l$. It is worth noting that Condition (iii) further characterizes the sequential order of selecting the edges on the path $i_0,\dots,i_l$, and the edges with one endpoints in the path $i_0,\dots,i_l$.} The following theorem provides a necessary and sufficient condition for the dependence of two arbitrary nodes' final states when there exists a directed path connecting them.

\medskip

\begin{theorem}\label{thm:dependence_directed_path}
	Suppose $i_0,i_1,\dots,i_l$ is a directed path of length $l$ in $\mTG$.  Then along the D-PPSC-Gossip algorithm there holds that  $\beta^\sharp_{i_0}$ and $\beta^\sharp_{i_l}$ are dependent if and only if the following conditions hold:
	\begin{enumerate}[(i)]
		\item $(i_0,i_{1})\prec(i_1,i_2)\prec\dots\prec(i_{l-1},i_l)$.
		\item
		\begin{itemize}
			\item[a)] $(i_0,i^\ast)\prec(i_0,i_{1})$ when $(i_0,i^\ast)\in\mET$;
			\item[b)] $(i_b,i^\ast)\prec(i_{b-1},i_{b})\textnormal{ or }(i_b,i_{b+1})\prec(i_b,i^\ast)$ when $(i_b,i^\ast)\in\mET$;
			\item[c)] $(i_l,i^\ast)\prec(i_{l-1},i_l)$ when $(i_l,i^\ast)\in\mET$.
		\end{itemize}
	\end{enumerate}
\end{theorem}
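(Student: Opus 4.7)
The plan is to reduce the dependence question to a combinatorial analysis of how the injected randomness propagates through the D-PPSC dynamics, by tracking which $\gammab_t$ end up appearing with nonzero coefficient in both $\beta^\sharp_{i_0}$ and $\beta^\sharp_{i_l}$. From (\ref{eq:linear_final_initial}) one has $\beta_i^\sharp=[\Cb\betab]_i+\sum_{t=1}^{n-1}d_{i,t}\gammab_t$ with integer coefficients $d_{i,t}$; since the $\gammab_t$ are i.i.d.\ with positive variance, $\beta^\sharp_{i_0}$ and $\beta^\sharp_{i_l}$ are dependent if and only if there exists some $t^\star$ with $d_{i_0,t^\star}\neq 0$ and $d_{i_l,t^\star}\neq 0$. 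Fixing $t^\star$, the coefficient evolution obeys a simple local rule: when $\asf_t=(u,v)$ fires at time $t>t^\star$, the coefficient at the tail $u$ is wiped to $0$ (since $x_u(t)=\gammab_t$), the coefficient at the head $v$ becomes the sum of the previous coefficients at $u$ and $v$, and other nodes are unchanged. Starting from $+1$ at $\Tail(\asf_{t^\star})$ and $-1$ at $\Head(\asf_{t^\star})$, a short induction using the tree structure of $\mT_\mG$ keeps every coefficient in $\{-1,0,+1\}$ throughout the algorithm.

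For the sufficiency direction I take $t_0$ to be the firing time of $(i_0,i_1)$. Condition (ii)(a) prevents any outgoing edge of $i_0$ from firing after $t_0$, so the $+1$ coefficient persists at $i_0$ and yields $d_{i_0,t_0}\neq 0$. Condition (i) forces the path edges to fire in the order $t_0<t_1<\cdots<t_{l-1}$ so that at each step the $-1$ coefficient is transferred from $i_b$ to $i_{b+1}$ by the tail-firing rule. Condition (ii)(b) is exactly what prevents the $-1$ at $i_b$ from being wiped by another outgoing edge of $i_b$ before it can be passed on, since it forbids such edges from firing strictly between $t_{b-1}$ and $t_b$. Condition (ii)(c) ensures no outgoing edge of $i_l$ follows $t_{l-1}$, so the $-1$ at $i_l$ persists and $d_{i_l,t_0}\neq 0$. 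Non-path incoming edges to any path node never alter the coefficient of $\gammab_{t_0}$ there, because the tree decomposition isolates the contributing tail-side subtree from any region that $\gammab_{t_0}$ has reached.

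For the necessity direction I rule out every other candidate $t^\star$. The firing edge $\asf_{t^\star}$ falls into one of three cases: the path edge $(i_0,i_1)$, another path edge $(i_b,i_{b+1})$ with $b\geq 1$, or an off-path edge. Because the main path is oriented forward and $\mT_\mG$ is a tree with unique paths, an off-path firing leaves the $+1$ and $-1$ charges in subtrees from which they can only drift further (outgoing edges can never reverse direction along the path), so at most one of the endpoints $i_0$, $i_l$ can receive any such coefficient. An analogous orientation argument shows that if $\asf_{t^\star}=(i_b,i_{b+1})$ with $b\geq 1$, the $+1$ charge starting at $i_b$ can only move forward along the oriented path toward $i_l$ and thus cannot reach $i_0$. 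Hence the only viable candidate is $t^\star=t_0$, and the failure of any of (i) or (ii) directly destroys the corresponding $\pm 1$ coefficient at $i_0$ or $i_l$.

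The main obstacle I expect is the careful bookkeeping in the necessity direction, in particular verifying that no off-path or intermediate-path edge can deposit nonzero coefficients simultaneously at both $i_0$ and $i_l$. This rests on the combination of the tree's unique-path property, the irreversibility of propagation along the oriented path (the required backward edges simply do not exist in $\mET$), and the $\{-1,0,+1\}$ coefficient invariant that prevents hidden algebraic cancellations across contributions from distinct firings.
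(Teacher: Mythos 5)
Your proposal is correct and follows essentially the same route as the paper: you reduce dependence of $\beta^\sharp_{i_0}$ and $\beta^\sharp_{i_l}$ to the two states sharing a random component $\gammab_{t_0}$, and then use the tree's unique-path property, the edge orientations, and the firing order to show the $+1$ charge persists at $i_0$ and the $-1$ charge travels to $i_l$ exactly under conditions (i)--(ii), with the same subtree-separation argument ruling out other candidate firings in the necessity direction. The only difference is presentational: the paper factors the charge-transfer bookkeeping into Lemma \ref{lem:RV_transfer} and Lemma \ref{lem:final_node_states}, whereas you inline the same argument.
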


%

For the graphical model $\mGM_{\betab^\sharp}$,   the following theorem holds.

\begin{theorem}\label{thm:Sigma}
	The graphical model $\mGM_{\betab^\sharp}$ is a tree. Moreover, $\Sigmab_{\betab^\sharp}/\sigma_{\gammab}^2$ is the Laplacian of $\mGM_{\betab^\sharp}$.
\end{theorem}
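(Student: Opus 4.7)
The plan is to exploit the linear representation $\betab^\sharp=\Cb\betab+\Db\gammab$ already established. Since $\betab$ is treated as deterministic and the coordinates of $\gammab$ are i.i.d.\ with variance $\sigma_\gammab^2$, one has $\Sigmab_{\betab^\sharp}=\sigma_\gammab^2\,\Db\Db^\top$, so the entire theorem reduces to showing that $\Db\Db^\top$ is the Laplacian of a spanning tree on $\mV$.

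My first move is a structural lemma on the columns of $\Db$: each column $\ub_k=\Ab_{n-1}\cdots\Ab_{k+1}\vb_k$ has the ``dipole'' form $\eb_{a_k}-\eb_{b_k}$ for some pair of distinct nodes $a_k,b_k\in\mV$. A short case analysis shows that $\Ab_j$ sends any dipole $\eb_a-\eb_b$ to either another dipole or to zero (the latter only when $\{a,b\}=\{\Tail(\asf_j),\Head(\asf_j)\}$), and the collapse case is ruled out by a tree argument: after step $k$, the ``$+$'' token of the $k$-th dipole moves only along directed edges $\asf_j$ with $j>k$, hence never revisits $\asf_k$, and therefore remains in the component of $\mT_\mG\setminus\asf_k$ containing $\Tail(\asf_k)$; the ``$-$'' token analogously stays in the other component, so the two cannot coincide.

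With the dipole form established, $\Db\Db^\top=\sum_{k=1}^{n-1}(\eb_{a_k}-\eb_{b_k})(\eb_{a_k}-\eb_{b_k})^\top$ is a sum of elementary edge Laplacians on $\mV$: symmetric positive semidefinite, non-positive off-diagonals, and zero row sums (the row-sum property follows from $\mathbf{1}^\top\vb_k=0$ together with $\mathbf{1}^\top\Ab_j=\mathbf{1}^\top$, whence $\mathbf{1}^\top\Db=0$). Therefore $\Db\Db^\top$ is the Laplacian of the multi-graph $\tilde{\mT}$ on $\mV$ whose edge multiset is $\{\{a_k,b_k\}:k=1,\ldots,n-1\}$. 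Moreover $[\Db\Db^\top]_{ij}$ with $i\neq j$ is nonzero precisely when some $\{a_k,b_k\}$ equals $\{i,j\}$, so the graphical model $\mGM_{\betab^\sharp}$ coincides with the simple graph underlying $\tilde{\mT}$.

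The remaining and main task is to show that $\tilde{\mT}$ is a spanning tree, equivalently that the $n-1$ dipole edges are distinct and span $\mV$ without cycles, equivalently that $\Db$ has rank $n-1$. I would argue by induction on $n$: pick a leaf $\ell$ of $\mT_\mG$ with pendant edge $\asf_k$ and compare the outputs of the D-PPSC-Gossip algorithm on $\mT_\mG$ and on the reduced oriented tree $\mT_\mG\setminus\{\ell\}$ with the induced edge ordering. If $\asf_k$ is oriented outward from $\ell$ (so $\Tail(\asf_k)=\ell$), then no other dipole's tokens can ever enter $\ell$, hence $\ub_j$ for $j\neq k$ agrees with its reduced counterpart verbatim, the inductive hypothesis gives a spanning tree on $\mV\setminus\{\ell\}$, and $\ub_k$ attaches $\ell$ via a single new edge. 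The case in which $\asf_k$ is oriented inward toward $\ell$ is more delicate because the step-$k$ update can transfer tokens of earlier dipoles onto $\ell$; here I would use the dependence characterizations in Theorems~\ref{thm:dependence} and \ref{thm:dependence_directed_path} to track exactly which $\ub_j$ acquire $\ell$ in their support and then reconcile with the reduced algorithm. This inward-leaf bookkeeping is where I expect the principal combinatorial difficulty of the proof to lie.
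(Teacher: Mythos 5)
Most of your scaffolding is sound and in fact parallels the paper's: the covariance formula $\Sigmab_{\betab^\sharp}=\sigma_\gammab^2\Db\Db^\top$ is correct, your ``dipole'' lemma (each column of $\Db$ equals $\eb_{a_k}-\eb_{b_k}$, with the collapse case excluded because the $+$ and $-$ tokens of the $k$-th dipole stay in the two different components of $\mT_\mG$ with $\asf_k$ removed) is a valid and rather clean counterpart of Lemma~\ref{lem:final_node_states}(i), and the reduction of the theorem to ``the $n-1$ dipole edges form a spanning tree, equivalently $\rank\Db=n-1$'' is correct. But the proof stops exactly where the theorem's substance lies. The tree property is what forces the $n-1$ dipole edges to be distinct and to connect all of $\mV$, and your induction only handles the leaf $\ell$ with $\Tail(\asf_k)=\ell$; when $\Head(\asf_k)=\ell$ the comparison with the reduced algorithm genuinely breaks, because every earlier dipole whose token sits at $\Tail(\asf_k)$ at time $k$ has that token transported to $\ell$ and frozen there, so those columns of $\Db$ do \emph{not} agree with their counterparts on $\mT_\mG$ minus $\ell$. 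This case cannot be avoided by a smarter choice of leaf (in a star oriented outward from the center every leaf is a head), and you explicitly defer its resolution to ``bookkeeping'' with Theorems~\ref{thm:dependence} and \ref{thm:dependence_directed_path} that you do not carry out. Note also that your cut observation by itself cannot close the gap: ``dipole $k$ crosses the fundamental cut of $\asf_k$'' does not imply a spanning tree (on the path $1$--$2$--$3$, the multiset $\{1,3\},\{1,3\}$ crosses both fundamental cuts yet is a doubled edge), so some genuinely global argument is still required.

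For comparison, the paper gets this missing piece differently: Lemma~\ref{lem:final_node_states}(ii) shows that any two dependent outputs share exactly one random component, so the graphical model has exactly $n-1$ edges, and connectivity is then proved by contradiction using the dependence characterizations of Theorems~\ref{thm:dependence} and \ref{thm:dependence_directed_path}; a connected graph on $n$ vertices with $n-1$ edges is a tree, and the Laplacian identity follows by computing the off-diagonal covariances ($-\sigma_\gammab^2$ on edges, $0$ otherwise) and the diagonal entries. Until you supply an argument of comparable strength for the inward-leaf case (or replace the induction by a direct connectivity/independence argument), your proposal establishes the Laplacian structure of a multigraph but not that $\mGM_{\betab^\sharp}$ is a tree, which is the main claim.
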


The proofs of Theorem \ref{thm:dependence} -- Theorem \ref{thm:Sigma} are deferred to the appendix. 
\subsection{Examples}
We now present a few illustrative examples.

\noindent{\bf Example 1.} Consider a $5$-node undirected graph $\mG=(\mV,\mE)$ as shown in Figure \ref{fig:conv_ori}. We select an oriented spanning tree $\mTG=(\mV,\mET)$ as in Figure \ref{fig:conv}. Let the edges in $\mET$ be sorted as
\begin{equation}\notag
\asf_1=(5,2)\prec\asf_2=(2,3)\prec\asf_3=(2,1)\prec\asf_4=(3,4).
\end{equation}

\begin{figure}[H]
	\centering
	\includegraphics[width=2in]{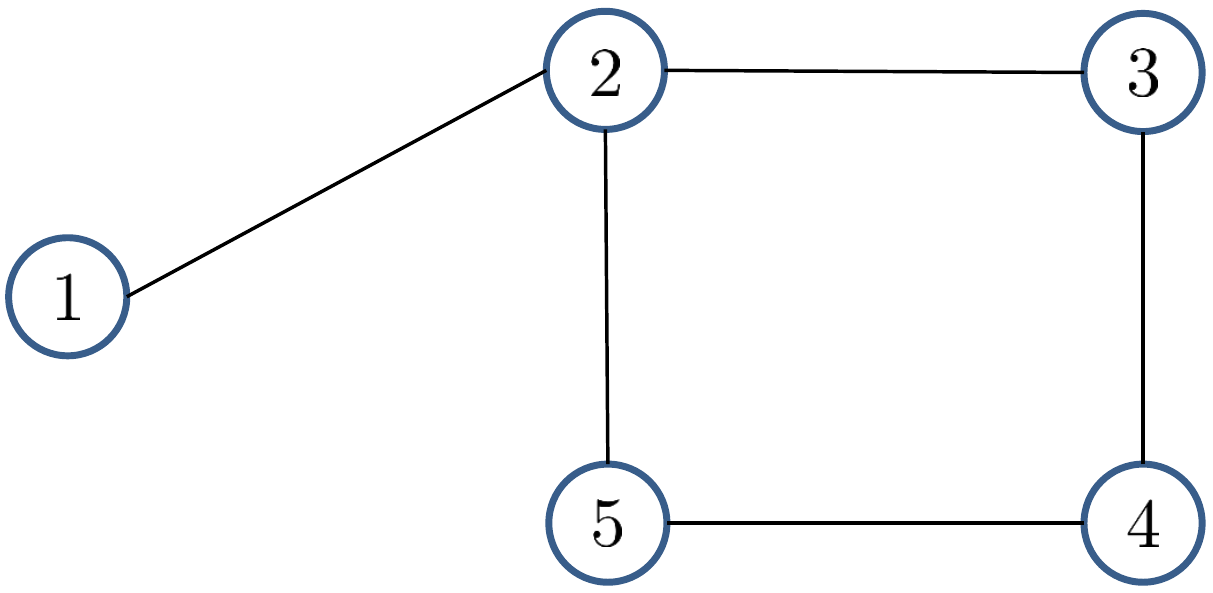}
	\caption{A $5$-node undirected graph $\mG$.}
	\label{fig:conv_ori}
\end{figure}

\begin{figure}[H]
	\centering
	\includegraphics[width=2in]{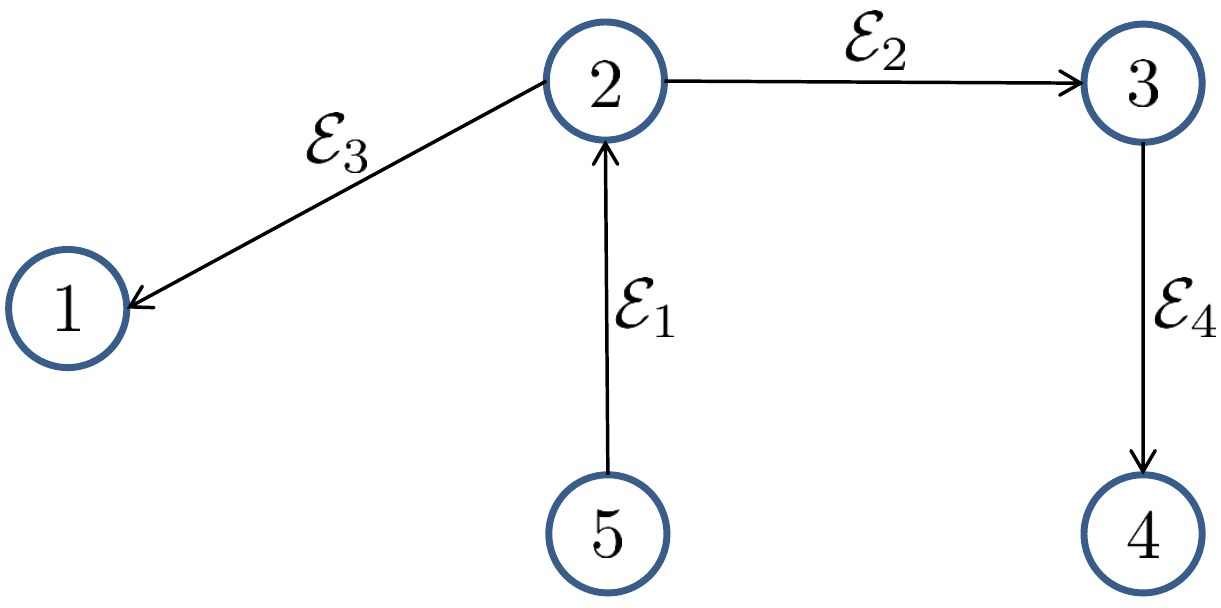}
	\caption{An oriented spanning tree $\mTG$ of $\mG$.}
	\label{fig:conv}
\end{figure}

The algorithm starts with $x_i(0)=\beta_i,\ i=1,\dots,5$ and produces
\begin{equation}\notag
\betab^\sharp=
\begin{bmatrix}
\beta_1+\gammab_2-\gammab_3 \\
\gammab_3 \\
\gammab_4 \\
\beta_2+\beta_3+\beta_4+\beta_5-\gammab_1-\gammab_2-\gammab_4 \\
\gammab_1
\end{bmatrix}.
\end{equation}
Evidently $\sum\limits_{i=1}^5 \beta^\sharp_i=\sum\limits_{i=1}^5 \beta_i$, i.e., network node states sum is preserved.
In addition, one can see that the following conditions hold: (i) $2,1$ and $2,3,4$ are both directed paths in $\mTG$; (ii) $(2,3)\prec(3,4)$; (iii) There exists no node $i^\ast$ such that $(i,i^\ast)\in\mET$ for any $i=1,2,3,4$.
This shows that the node state pair $(\beta^\sharp_1,\beta^\sharp_4)$ satisfies the dependence conditions of Theorem \ref{thm:dependence}. Clearly $\beta^\sharp_1$ and $\beta^\sharp_4$ are dependent, which validates Theorem \ref{thm:dependence}. Validation of Theorem \ref{thm:dependence_directed_path} can be similarly shown from the node state pair $\beta^\sharp_4$ and $\beta^\sharp_5$.

The graphical model $\mGM_{\betab^\sharp}$ of $\betab^\sharp$ is illustrated in Figure \ref{fig:conv_graphical_model}. By direct calculation, we have
\begin{equation}\notag
\Sigmab_{\betab^\sharp}=
\begin{bmatrix}
2\sigma_\gammab^2 & -\sigma_\gammab^2 & 0 & -\sigma_\gammab^2 & 0 \\
-\sigma_\gammab^2 & \sigma_\gammab^2 & 0 & 0 & 0 \\
0 & 0 & \sigma^2 & -\sigma_\gammab^2 & 0 \\
-\sigma_\gammab^2 & 0 & -\sigma_\gammab^2 & 3\sigma_\gammab^2 & -\sigma_\gammab^2 \\
0 & 0 & 0 & -\sigma_\gammab^2 & \sigma_\gammab^2
\end{bmatrix}.
\end{equation}
It is clear that $\Sigmab_{\betab^\sharp}/\sigma_\gammab^2$ is the Laplacian of $\mGM_{\betab^\sharp}$ and this validates Theorem \ref{thm:Sigma}.

\begin{figure}[H]
	\centering
	\includegraphics[width=2.5in]{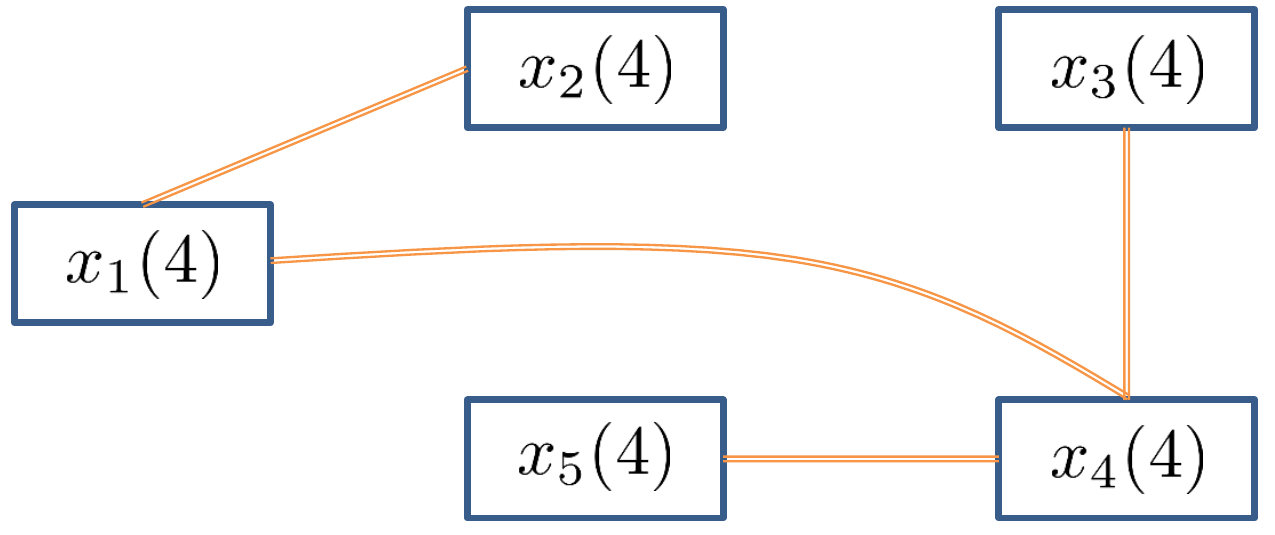}
	\caption{The graphical model $\mGM_{\betab^\sharp}$ of the node states $\xb(4)$ over $\mTG$ along the algorithm (\ref{eq:alg1}) with the edge selection order $(5,2)\prec(2,3)\prec(2,1)\prec(3,4)$.}
	\label{fig:conv_graphical_model}
\end{figure}

\medskip

\noindent{\bf Example 2.} Consider a linear equation $\Hb\yb=\zb$ with respect to $\yb\in\R^2$ where
\begin{equation}\notag
\Hb=
\begin{bmatrix}
2 & -1.5 \\
1 & 1 \\
-0.5 & 0.8
\end{bmatrix},\quad
\zb=
\begin{bmatrix}
2.5\\
3\\
-0.2
\end{bmatrix},
\end{equation}
which has an exact solution $\yb^\ast=[2\ 1]^\top$. Let $\mG_{\rm ring}$ be a $3$-node ring graph. Suppose each node $i\in\{1,2,3\}$ of $\mG_{\rm ring}$ only has the knowledge of $\Hb_i$ being the $i$-th row of $\Hb$ and $\zb_i$ being the $i$-th component of $\zb$, and aims to obtain the solution $\yb^\ast$.  Let $\mathpzc{P}_i:\R^r\to\R^r$ is the projection onto the affine space $\{\yb:\Hb_i^\top\yb=\zb_i\}$.  Then we run the following Privacy Preserving Linear Equation Solver (PPLES) introduced in Part I of this paper over $\mG_{\rm ring}$ with the initial values $y_1(0)=[-40\ \ 60]^\top, y_2(0)=[0\ \ 65]^\top, y_3(0)=[35\ \ 55]^\top$. The execution of the node states computation for the first time step is plotted in Figure \ref{fig:lae_geo}.
{Each node always has its state encrypted before sharing it, thus can keep the state, and further its knowledge of the linear equation private in the computation.}

\begin{figure}
	\centering
	\includegraphics[width=8cm]{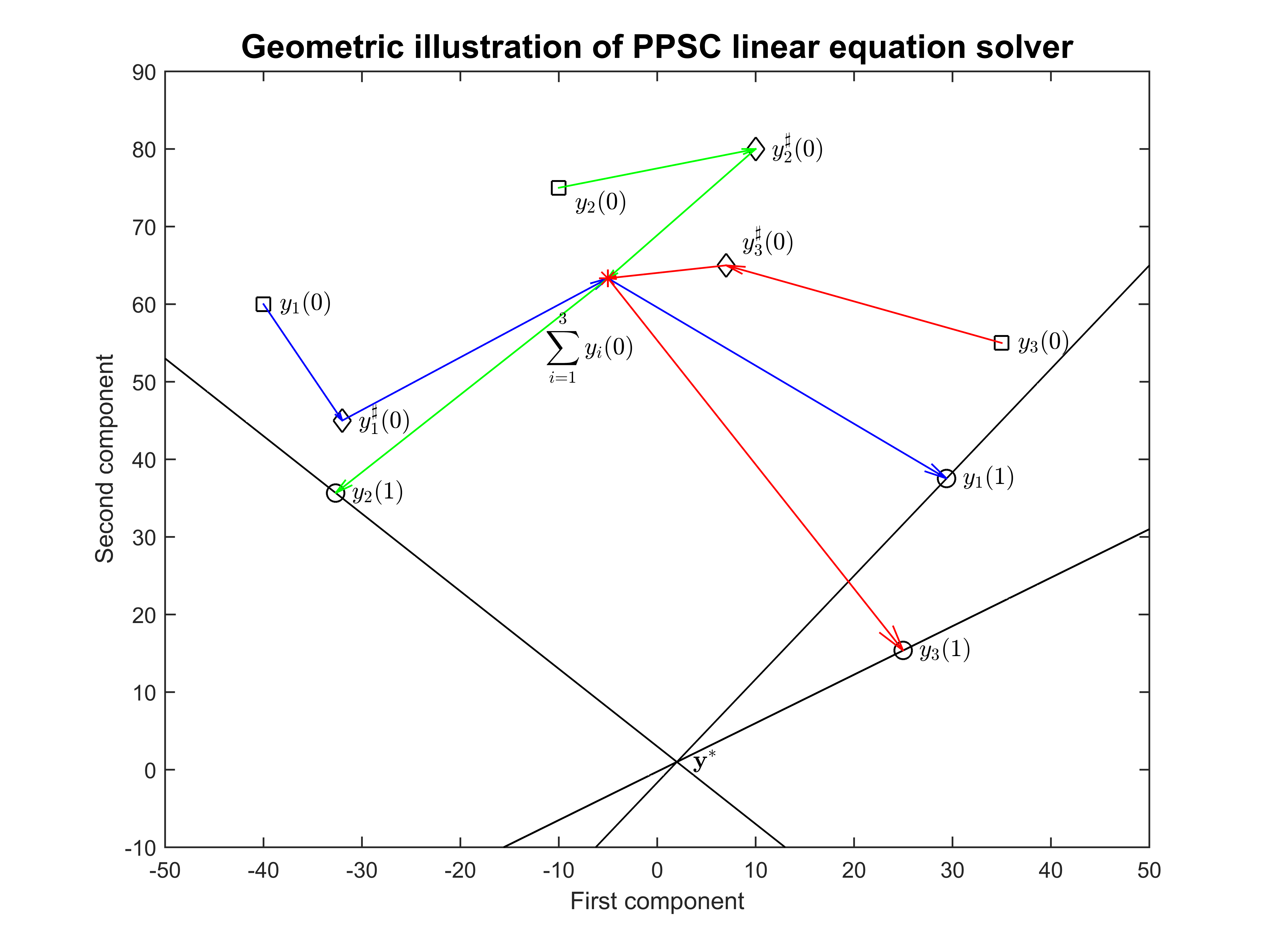}
	\caption{A geometric illustration of the privacy-preserving linear-equation solver at the first time step. The arrows in blue, green and red show the state trajectories of the node $1,2,3$, respectively. The node states avoided revealing the exact information about the solution spaces during the recursion.}
	\label{fig:lae_geo}
\end{figure}

\subsection{Discussions}
Several privacy-preserving algorithms have been proposed in the literature for consensus seeking based on the idea of injecting random offsets to node state updates \cite{Kefayati2007,huang2012differentially,manitara2013privacy,mo2017privacy}. With plain  random noise injection into the standard consensus algorithm, one inevitably lose accuracy in the convergence limit even in the probabilistic sense \cite{Kefayati2007}. However, one can show that the output of such type of algorithms can be differentially private while maintaining a certain degree of error \cite{huang2012differentially}. It was further shown in \cite{manitara2013privacy} that if one  carries out noise injection for a finite number of times and then removes the total offsets once and for all, one can obtain convergence at the exact network average. Nodes therefore needed to maintain additional memories of each  offset for the implementation of  the algorithm in \cite{manitara2013privacy}, and the offsets can be detected and reconstructed with the node states update trajectories. Recently, a comprehensive analysis was presented for the privacy protection abilities in noise-injection mechanism for average consensus algorithms with diminishing noises \cite{mo2017privacy}.

We remark that the use of random noise in the D-PPSC-Gossip algorithm is more than a random offset as one of the node states in the gossiping pair has been fully randomized. This leads to two advantages in terms of convergence and privacy-preservation:
\begin{itemize}
	\item[(i)] The D-PPSC-Gossip algorithm   converges in finite time where the network sum is accurately maintained at the algorithm output. Even its randomized  variations, which will be presented later, converge in finite time along every individual  sample. While mean-square convergence \cite{mo2017privacy} or small mean-square error \cite{huang2012differentially} implies that one needs to repeat a number of samples for a single  computation task to obtain practically accepted result.
	
	\item[(ii)] Noise-injection algorithms are vulnerable against {\it external eavesdroppers} who may hold the entire network structure information and the trajectories of node state updates. Such an eavesdropper is in general equivalent\footnote{There can still be some subtle difference between an external eavesdropper or a malicious node since as a malicious node, it will know its own initial value precisely which will influence the network state evolution.} to a malicious node connected to the entire network under the framework of   \cite{manitara2013privacy,mo2017privacy}, where   the entire network initials can be disclosed even with noisy state observations \cite{mo2017privacy}. By contrast, one can show that  under the D-PPSC-Gossip algorithm the network initials are not identifiable even for external eavesdroppers.

\end{itemize}
More importantly, we remark that as the $\betab^\sharp=\mathscr{P}_{\rm D-PPSC}(\betab)$ provides strong deniability for the $\beta_i$ of each node even if $\betab^\sharp$ is observed, the $\betab^\sharp$ can be {\em broadcast to the entire network}. Consequently, the computation of $\sum\limits_{i=1}^n\beta_i/n$ for average consensus from $\betab$ over the graph $\mG$ can be made beyond the links in $\mG$. This could in turn significantly accelerate the process of the nodes agreeing on $\sum\limits_{i=1}^n\beta_i/n$.

\subsection{MLE and MAP Strategies}
We now discuss the strategies of eavesdroppers holding the output of the D-PPSC-Gossip and try to infer information about the input. 
{Assuming $\gammab_t\overset{\text{i.i.d.}}{\sim} \normdist(0, \sigma^2)$ for $t=1,2,\dots,n-1 $, we now propose the strategy of estimating $\betab$ from $\betab^\sharp$ for the D-PPSC-Gossip algorithm based on maximum likelihood estimation (MLE) and maximum a posteriori (MAP) methods.

	\subsubsection{Maximum Likelihood Estimation}

	Let the D-PPSC-Gossip algorithm be repetitively applied on a network independently for $l>0$ times starting from the same initial value $\betab$. Let $\yb_1,\dots,\yb_l$ denote the realization of $\mathscr{P}_{\rm D-PPSC}(\betab)$.
	Let $\db_i^\top$ denote the $i$-th row of $\Db$. Then $\Db\gammab\sim\normdist(0,\Lambdab_{\Db})$ ,where
	\begin{equation}\notag
	[\Lambdab_{\Db}]_{ij}=\left\{
	\begin{aligned}
	& \sigma^2\|\db_i\|^2 & \textnormal{ if }i=j;\\
	& -\sigma^2 & \textnormal{ if }\exists \kappa\Rightarrow[\Db]_{i\kappa}[\Db]_{j\kappa}=-1;\\
	& 0 & \textnormal{ otherwise}.
	\end{aligned}
	\right.
	\end{equation}
	The distributions of $\yb_1,\dots,\yb_l$ are degenerate because $\Lambdab_{\Db}$ is singular.
	Let $\mu_1,\lambda_{11}\in\R,\ \mub_{\yb},\lambdab_{1\yb}\in\R^{n-1},\ \Lambdab_{\yb\yb}\in\R^{(n-1)\times (n-1)}$ satisfy
	\begin{equation}\label{eq:decomp1}
	\begin{aligned}
	\Cb\betab=
	\begin{bmatrix}
	\mu_1\\
	\mub_{\yb}
	\end{bmatrix}
	,\quad
	\Lambdab_{\Db}=
	\begin{bmatrix}
	\lambda_{11} & \lambdab_{1\yb}^\top\\
	\lambdab_{1\yb} & \Lambdab_{\yb\yb}
	\end{bmatrix}.
	\end{aligned}
	\end{equation}
	Here $\Lambdab_{\yb\yb}$ is noted to be nonsingular in view of the Cauchy's interlacing theorem \cite{horn1990matrix} noticing the fact that $\Lambdab_{\Db}$ is the Laplacian of a connected graph. Let $\betab_{-1}^\sharp=[\beta_2^\sharp\ \dots\ \beta^\sharp_n]^\top$. From (\ref{eq:linear_final_initial}), we know $\beta^\sharp_1\mid\betab^\sharp_{-1}$ and $\betab^\sharp_{-1}$ are both normally distributed according to \cite{rasmussen2004gaussian}:
	\begin{align}\label{eq:x_1_mid_y}
	\beta^\sharp_1\mid\betab^\sharp_{-1}&\sim\normdist(\mu_1+\phib_{1\yb}^\top(\betab^\sharp_{-1}-\mub_{\yb}),\psi_{1\yb})\\
	\betab^\sharp_{-1}&\sim\normdist(\mub_{\yb},\Lambdab_{\yb\yb}).\label{eq:y_gpr}
	\end{align}
	where
	\begin{equation}\notag
	\begin{aligned}
	\phib_{1\yb}&=\Lambdab_{\yb\yb}^{-1}\lambdab_{1\yb}\\
	\psi_{1\yb}&=\lambda_{11}-\lambdab_{1\yb}^\top\Lambdab_{\yb\yb}^{-1}\lambdab_{1\yb}.
	\end{aligned}
	\end{equation}
	{Further by $\Lambdab_{\Db}\1=0$, one has
		\begin{equation}\label{eq:phi_psi_1_0}
		\begin{aligned}
		\phib_{1\yb}=-\1,\
		\psi_{1\yb}=0.
		\end{aligned}
		\end{equation}
		Then (\ref{eq:x_1_mid_y}) and (\ref{eq:phi_psi_1_0}) imply
		\begin{equation}\label{eq:x_1_mid_y_delta}
		\beta^\sharp_1\mid\betab^\sharp_{-1} \sim \delta(\1^\top \betab^\sharp-\1^\top\betab).
		\end{equation}
		Let $\betamlefs$ be the MLE of $\betab$ from $\yb_1,\dots,\yb_l$. Based on (\ref{eq:decomp1}), (\ref{eq:y_gpr}) and (\ref{eq:x_1_mid_y_delta}), one can find by the definition of MLE
		\begin{align}
		\qquad\betamlefs\notag
		&\in\argmax\limits_{\betab\in\R^n} \prod\limits_{k=1}^l {\rm PDF}(\yb_k;\betab)\notag\\
		&= \argmax\limits_{\betab\in\R^n} \prod\limits_{k=1}^l\delta(\1^\top \yb_k-\1^\top\betab)\exp\big(-\frac{1}{2}(\yb_k-\Cb\betab)^\top\Lambdab_\ast (\yb_k-\Cb\betab)\big),\label{eq:mlepfs_objectivefunction}
		\end{align}
		where
		\begin{equation}\notag
		\begin{aligned}
		\Lambdab_\ast =
		\begin{bmatrix}
		0 & 0\\
		0 & \Lambda_{\yb\yb}^{-1}
		\end{bmatrix}.
		\end{aligned}
		\end{equation}
		Evidently the unconstrained optimization problem (\ref{eq:mlepfs_objectivefunction}) is equivalent to the constrained one
		\begin{equation} \label{eq:mle_constrained1}
		\begin{aligned}
		\min_{\betab\in\R^n}\qquad &  \sum\limits_{k=1}^l(\yb_k-\Cb\betab)^\top\Lambdab_\ast (\yb_k-\Cb\betab) \\
		{\rm s.t.} \qquad  &   \1^\top\betab=\1^\top\yb_1.
		\end{aligned}
		\end{equation}
		It is clear that Slater's condition holds for (\ref{eq:mle_constrained1}) because the only constraint is an affine equality. Introduce a Lagrange multiplier $\lambda\in\R$. Then the set of $\betamlefs$ is given by the following two KKT conditions \cite{boyd2004}:
		\begin{equation}\notag
		\begin{aligned}
		0 &= \sum\limits_{k=1}^l (\Cb^\top\Lambdab_\ast \Cb\betamlefs-\Cb^\top\Lambdab_\ast \yb_k)+\lambda\1,\\
		0 &= \1^\top(\betamlefs-\yb_1),
		\end{aligned}
		\end{equation}
		which lead to
		\begin{equation}\notag
		\begin{aligned}
		\betamlefs&\in\{\betab\in\R^n:\emlefs
		\begin{bmatrix}
		\betab\\
		\lambda
		\end{bmatrix}=
		\bmlefs,\ \lambda\in\R\},
		\end{aligned}
		\end{equation}
		where
		\begin{equation}\notag
		\emlefs=
		\begin{bmatrix}
		l\Cb^\top\Lambdab_\ast \Cb & \1\\
		\1^\top & 0
		\end{bmatrix},\quad
		\bmlefs=
		\begin{bmatrix}
		\Cb^\top\Lambdab_\ast \sum\limits_{k=1}^l\yb_k\\
		\1^\top\yb_1
		\end{bmatrix}.
		\end{equation}
	
	}

	Since $\mathbf{C}$ is singular, $\emlefs$ is not invertible, and the MLE approach is non-unique. This is true even with small (or zero) variance on  $\gammab$.

	\subsubsection{Maximum a posteriori Estimation}
	The eavesdropper can also implement a MAP estimation strategy. For MAP estimation, we let the eavesdroppers have a normal distribution assumption of $\betab$. Assume $\betab\sim\normdist(\mub_{\betab},\Lambdab_{\betab})$ with $\mub_{\betab}\in\R^n$ and $\Lambdab_{\betab}\in\R^{n\times n}$ being an $n$-by-$n$ positive definite matrix. Let $\betamapfs$ be the MAP estimators of $\betab$ upon observing $l$ sample points of $\betab^\sharp$, respectively.
	According to Bayes' rule, there holds
	\begin{equation}\label{eq:bayes_rule}
	{\rm PDF}(\betab\mid\betab^\sharp)=\frac{{\rm PDF}(\betab){\rm PDF}(\betab^\sharp\mid\betab)}{{\rm PDF}(\betab^\sharp)}.
	\end{equation}
	By (\ref{eq:bayes_rule}) and the definition of MAP, the MAP estimator of $\betab$ is solution to
	{
		\begin{equation} \label{eq:map_opt}
		\begin{aligned}
		\min_{\betab\in\R^n}\qquad &  \sum\limits_{k=1}^l(\yb_k-\Cb\betab)^\top\Lambdab_\ast (\yb_k-\Cb\betab) +(\betab-\mub_{\betab})^\top\Lambdab_{\betab}^{-1}(\betab-\mub_{\betab})\\
		{\rm s.t.} \qquad  &   \1^\top\betab=\1^\top\yb_1.
		\end{aligned}
		\end{equation}
		As a result of applying KKT conditions to the optimization problem (\ref{eq:map_opt}), the MAP estimator $\betamapfs$ can be uniquely determined by 
		\begin{equation}\notag
		\begin{aligned}
		\emapfs
		\begin{bmatrix}
		\betamapfs\\
		\lambda
		\end{bmatrix}=
		\bmapfs,\ \lambda\in\R\textnormal{ is a Lagrange multiplier},
		\end{aligned}
		\end{equation}
		where
		\begin{equation}\notag
		\begin{aligned}
		\emapfs=
		\begin{bmatrix}
		l\Cb^\top\Lambdab_\ast \Cb+\Lambdab_{\betab}^{-1} & \1\\
		\1^\top & 0
		\end{bmatrix},\ 
		\bmapfs=
		\begin{bmatrix}
		\Cb^\top\Lambdab_\ast \sum\limits_{k=1}^l\yb_k+\Lambdab_{\betab}^{-1}\mub_{\betab}\\
		\1^\top\yb_1
		\end{bmatrix}.
		\end{aligned}
		\end{equation}
	}
	
	The MAP solution is always unique, however this is solely due to the presence of the prior. Therefore, if a good prior estimate is available, MAP may improve the quality of this estimate in certain directions in the space of $\betab$, but there are always some directions for which knowledge of $\betab^\sharp$ provides no information.
(Note that this privacy analysis is based entirely on the properties of $\mathbf{C}$, and the expressions for variance would also depend on the distribution of $\gammab$.)

}

\section{Randomized PPSC Gossiping}\label{sec:rand}

In this section, we propose a randomized PPSC algorithm based on classical random gossiping \cite{boyd2006randomized}. To this end, let $\mathbf{P}$ be a stochastic matrix \cite{horn1990matrix}, i.e., a matrix with non-negative entries possessing a sum one along each row. The matrix  $\mathbf{P}$ complies with the structure of the graph $\mathrm{G}$ in the sense that $[\mathbf{P}]_{ij}>0$ if and only if $\{i,j\}\in \mathrm{E}$. Compared to deterministic gossiping, randomized gossip algorithms are ideal solutions for distributed systems, where nodes are  self-organized with asynchronous clocks \cite{boyd2006randomized}. Then we propose the following R-PPSC-Gossip algorithm.

\begin{algorithm}[H]
	{$\mathbf{Randomized\textnormal{ }PPSC\textnormal{ }Gossip\  Algorithm}$}\\
	Set $x_i(0)\gets\beta_i$ for each $i\in\mV$. For $t=1,2,\dots$, repeat the following steps.
	\begin{algorithmic}[1]
		\STATE Independently a node $i$ is selected with probability $1/n$, and then this node $i$ randomly selects a neighbor $j\in\mathrm{N}_i$ with probability $[\mathbf{P}]_{ij}$.
		\STATE Node $i$ randomly generates $\gammab_t\in\R$ and sends $x_{i}(t-1)-\gammab_t$ to $j$.
		\STATE Nodes $i,j$ update their states by
		\begin{equation}\notag
		\begin{aligned}
		x_{i}(t) &= \gammab_t\\
		x_{j}(t) &= x_{i}(t-1) + x_{j}(t-1) - \gammab_t.
		\end{aligned}
		\end{equation}
		\STATE The states remain unchanged for the rest of the nodes in $\mV$, i.e.,
		$
		x_k(t)=x_k(t-1),\ k\in\mV\setminus\{i,j\}.
		$
	\end{algorithmic}
\end{algorithm}

For any $t\ge0$, there exist random matrices $\Cb_t\in\R^{n\times n}$ and $\Db\in\R^{n\times (n-1)}$ such that $\betab^\sharp(t)=\Cb_t\betab+\Db_t\gammab$. It can be verified that any realization of $\Cb_t$ will not be of full rank, and therefore the non--identifiability condition in Theorem \ref{thm0} continues to hold. Furthermore, the value of $\Cb_t$ and $\Db_t$ will be harder to obtain, compared to the $\Cb$ and $\Db$ in (\ref{eq:linear_final_initial}), i.e., $\mathscr{P}_{\rm R-PPSC}(\betab)$ provides improved privacy preservation over $\mathscr{P}_{\rm D-PPSC}(\betab)$. Also, following a similar analysis of Theorem \ref{thmdp}, $\mathscr{P}_{\rm R-PPSC}(\betab)$ can be $\epsilon$--differentially private for sufficiently large $t$.

\subsection{Convergence Limits}
We let
$\qb_1$ denote the unique left Perron vector of $\Pb$ with $\1^\top\qb_1=1$. Let $(\qb_2,\qb_2^\prime),\dots,(\qb_n,\qb_n^\prime)$ denote the left-right generalized eigenvector pairs of $\Pb$ satisfying $\qb_i^\top\qb_i^\prime=1$ for all $i=2,\dots,n$.

\medskip

\begin{proposition}\label{prop:mean_limit}
	Let $\mu_{\gammab}$ denote the expected value of $\gammab_t$. Then along the  R-PPSC-Gossip algorithm, there holds $\lim\limits_{t\to\infty}\mathbb{E}(\xb(t)) = \qb_1\1^\top\xb(0) + \mu_{\gammab}\sum\limits_{i=2}^n\qb_i\qb_i^{\prime\top}\1$.
\end{proposition}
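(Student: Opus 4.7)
The plan is to cast the per-step update of the R-PPSC-Gossip algorithm as a random affine recursion and then take expectations to obtain a deterministic linear recursion for $\mathbb{E}(\xb(t))$. Conditional on the pair $(i,j)$ being selected (which happens with probability $[\Pb]_{ij}/n$), the update in Steps 3--4 can be written as $\xb(t) = \Wb_{(i,j)}\xb(t-1) + \gammab_t\vb_{(i,j)}$, where $\Wb_{(i,j)} = \Ib - (\eb_i-\eb_j)\eb_i^\top$ and $\vb_{(i,j)} = \eb_i - \eb_j$. A short calculation using the row-stochasticity of $\Pb$ shows that, after averaging over $(i,j)$ and using $\mathbb{E}(\gammab_t) = \mu_{\gammab}$, the mean recursion becomes
$$
\mathbb{E}(\xb(t)) = \bar{\Wb}\,\mathbb{E}(\xb(t-1)) + \mu_{\gammab}\bar{\vb}, \quad \bar{\Wb} = \Big(1-\tfrac{1}{n}\Big)\Ib + \tfrac{1}{n}\Pb^\top, \quad \bar{\vb} = \tfrac{1}{n}(\Ib - \Pb^\top)\1.
$$

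Next I would read off the spectral structure of $\bar{\Wb}$. Since $\bar{\Wb}$ is an affine function of $\Pb^\top$, its right eigenvectors are exactly the left eigenvectors $\qb_i$ of $\Pb$ and its left eigenvectors are the right eigenvectors $\qb_i^\prime$, with eigenvalues $\mu_i = (1-1/n) + \lambda_i/n$. Row-stochasticity gives $\qb_1^\prime = \1$, while irreducibility of $\Pb$ (from connectedness of $\mG$) together with the $(1-1/n)\Ib$ perturbation guarantee that $\mu_1 = 1$ is simple and $|\mu_i| < 1$ for all $i \ge 2$. Consequently $\bar{\Wb}^t$ converges geometrically to the rank-one projector $\qb_1\1^\top$ onto the Perron eigenspace.

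Finally I would unroll the recursion as $\mathbb{E}(\xb(t)) = \bar{\Wb}^t\xb(0) + \mu_{\gammab}\sum_{s=0}^{t-1}\bar{\Wb}^s\bar{\vb}$. The first term clearly tends to $\qb_1\1^\top\xb(0)$. For the second, a direct check gives $\1^\top\bar{\vb} = 0$ (using $\1^\top\Pb^\top = \1^\top$), so $\bar{\vb}$ lies in the complementary $\bar{\Wb}$-invariant subspace on which $\Ib - \bar{\Wb}$ is invertible; solving $(\Ib - \bar{\Wb})u = \bar{\vb}$ under the constraint $\1^\top u = 0$ yields $u = \1 - n\qb_1$. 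The biorthogonal expansion $\1 = n\qb_1 + \sum_{i=2}^n(\qb_i^{\prime\top}\1)\qb_i$ then rewrites this as $\sum_{i=2}^n\qb_i\qb_i^{\prime\top}\1$, giving the claimed limit.

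The main obstacle is the case in which $\Pb$ fails to be diagonalizable, where the pairs $(\qb_i,\qb_i^\prime)$ must be interpreted through a Jordan decomposition and the clean biorthogonal expansion above is not literally available. I would handle this by replacing eigenvector sums with spectral projectors: convergence $\bar{\Wb}^t \to \qb_1\1^\top$ still follows from $\mu_1 = 1$ being algebraically simple, and invertibility of $\Ib - \bar{\Wb}$ on the complementary invariant subspace still furnishes the closed form $u = \1 - n\qb_1$, which can then be re-expanded using the generalized-eigenvector basis that the proposition implicitly adopts.
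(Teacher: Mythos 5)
Your proposal is correct and follows essentially the same route as the paper's proof: take expectations to obtain the mean recursion with $\bar{\Wb}=\Ib+\tfrac{1}{n}(\Pb^\top-\Ib)$ and $\bar{\vb}=\tfrac{1}{n}(\Ib-\Pb^\top)\1$, use Perron--Frobenius to get $\bar{\Wb}^t\to\qb_1\1^\top$, and evaluate the accumulated noise term through the spectral (Jordan) structure of $\Pb^\top$. Your explicit verification that $\1^\top\bar{\vb}=0$, the closed form $u=\1-n\qb_1$, and the spectral-projector treatment of a non-diagonalizable $\Pb$ simply spell out details the paper compresses into ``by performing Jordan decomposition on $\Pb^\top$.''
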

\begin{proof}
	In a compact form, the update of network node states can be written as
	\begin{equation}\label{eq:rand_compact_form}
	\xb(t)=\mathpzc{A}(t)\xb(t-1) + \gammab_t\mathpzc{v}(t),\ t=1,2,\dots,
	\end{equation}
	where  $\mathpzc{A}(t)$  is a random matrix and $\mathpzc{v}(t)$ is a random vector.  By the structure of the proposed algorithm, one easily know
	\begin{align}
	\mathbb{E}(\mathpzc{A}(t)) &= \Ib + \frac{1}{n}(\Pb^\top-\Ib),\label{eq:rand_Abar}\\
	\mathbb{E}(\mathpzc{v}(t))  &= \frac{1}{n}(\Ib-\Pb^\top)\1. \label{eq:rand_vbar}
	\end{align}
	
	From (\ref{eq:rand_Abar}), it is worth noting that $\mathbb{E}(\mathpzc{A}(t))$ is a primitive stochastic matrix with left Perron vector being $\qb_1$. Hence \cite{horn1990matrix}
	\begin{equation}\label{eq:Abar_conv}
	\lim\limits_{t\to\infty}\big(\mathbb{E}(\mathpzc{A}(t))\big)^t=\qb_1\1^\top.
	\end{equation}
	By performing Jordan decomposition on $\Pb^\top$, one can easily obtain from (\ref{eq:rand_Abar}) and (\ref{eq:rand_vbar})
	\begin{equation}\label{eq:series_conv}
	\lim\limits_{t\to\infty}\sum\limits_{i=0}^{t-1}\big(\mathbb{E}(\mathpzc{A}(t))\big)^i \mathbb{E}(\mathpzc{v}(t))=\sum\limits_{i=2}^n\qb_i\qb_i^\top\1.
	\end{equation}
	Then the desired  conclusion can be obtained noticing the independence of the node updates.
\end{proof}

\subsection{Convergence Rate}
Introduce $\mathcal{Q}_t$ as the event that all nodes have altered their states at least once during the time $s\in[0,t]$. Then $\mathcal{Q}_t$ holding true implies that the entire network states have been encrypted by the R-PPSC-Gossip  algorithm. Define
\begin{equation}\notag
\xib=\frac{1}{n}(\Pb+\Pb^\top)\1.
\end{equation}
Let $\xi_i$ denote the $i$-th component of $\xib$. Let $2^{\mathrm{S}}$ denote the power set of a set $\mathrm{S}$. Recall that an independent set of a graph is a subset of the graph vertex set, in which two arbitrary nodes are not adjacent in the graph \cite{godsil2013}. Based on this, a result regarding the convergence rate is shown in the following.
\begin{proposition}\label{prop:tighter_bound}
	Consider an undirected and connected graph $\mG=(\mV,\mE)$. Let the node set $\mV$ be partitioned into $\kappa > 1$ mutually disjoint independent vertex sets $\mW_1,\dots,\mW_\kappa$, which satisfy $\bigcup\limits_{i=1}^\kappa \mW_i=\mV$. Let $\pi_1,\dots,\pi_\kappa$ denote some arbitrary elements in $\mW_1,\dots,\mW_\kappa$, respectively. Then there holds
	\begin{equation}\notag
	\begin{aligned}
	\mathbb{P}(\mathcal{Q}_t)&\ge 1-\kappa+\sum\limits_{i=1}^\kappa\sum\limits_{\mU\in 2^{\mW_i\setminus\{\pi_i\}}}(-1)^{\left|\mU\right|}\big((1-\sum\limits_{j\in\mU}\xi_j)^t\notag\\
	&\quad
	-(1-\xi_{\pi_i}-\sum\limits_{j\in\mU}\xi_j)^t\big).
	\end{aligned}
	\end{equation}
\end{proposition}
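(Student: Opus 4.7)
The plan is to interpret $\xi_k$ as the per-step probability that node $k$ is touched by the R-PPSC-Gossip update, exploit the independent-set property of each $\mW_i$ to express joint non-touching probabilities in closed form, and then combine the resulting bounds by a union bound across the $\kappa$ classes together with an inclusion-exclusion inside each class.

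First I would verify that, at every step, the probability that node $k$ is updated equals $\xi_k$. Node $k$ is updated iff either $k$ is the initiator (probability $1/n$) or $k$ is selected as a neighbor by some other initiator (total probability $\tfrac{1}{n}(\Pb^\top\1)_k$, since $[\Pb]_{kk}=0$ by the graph-compliance of $\Pb$). These two events are disjoint, and their probabilities sum to $\xi_k=\tfrac{1}{n}\big((\Pb\1)_k+(\Pb^\top\1)_k\big)$ on using $\Pb\1=\1$.

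The combinatorial core is the observation that, for any independent set $U\subseteq\mV$, the single-step events ``node $k$ is touched'' for $k\in U$ are pairwise disjoint, because each step touches exactly one adjacent pair $(i,j)$ and no two elements of $U$ are adjacent. Hence the per-step probability that no node of $U$ is touched equals $1-\sum_{k\in U}\xi_k$, and independence across the $t$ steps gives
\[
\mathbb{P}\Big(\bigcap_{k\in U}\mathcal{A}_k^t\Big)=\Big(1-\sum_{k\in U}\xi_k\Big)^t,
\]
where $\mathcal{A}_k^t$ denotes the event that node $k$ is not touched during steps $1,\dots,t$. This identity applies to every subset $U$ of any $\mW_i$, since subsets of an independent set are independent.

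From here, setting $\mathcal{B}_i^t=\bigcup_{k\in\mW_i}\mathcal{A}_k^t$ so that $\mathcal{Q}_t^c=\bigcup_{i=1}^\kappa\mathcal{B}_i^t$, a union bound plus inclusion-exclusion inside each $\mW_i$ yields the preliminary lower bound $\mathbb{P}(\mathcal{Q}_t)\ge 1-\kappa+\sum_i\sum_{U\subseteq\mW_i}(-1)^{|U|}(1-\sum_{k\in U}\xi_k)^t$. The stated expression is then recovered by splitting each inner sum by whether $\pi_i\in U$: pairing $U\subseteq\mW_i\setminus\{\pi_i\}$ with $U\cup\{\pi_i\}$ and using $(-1)^{|U\cup\{\pi_i\}|}=-(-1)^{|U|}$, the two contributions combine into the signed difference $(-1)^{|U|}\big[(1-\sum_{k\in U}\xi_k)^t-(1-\xi_{\pi_i}-\sum_{k\in U}\xi_k)^t\big]$, which is precisely the summand in the statement. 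The one step where real care is required is the mutual-exclusivity claim underpinning the closed-form identity; without the independent-set hypothesis, the per-step non-touching probability for $U$ would pick up correction terms from edges with both endpoints in $U$ and the clean geometric-type expression would be lost, which is why the partition into $\kappa$ independent sets is essential.
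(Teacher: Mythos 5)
Your proposal is correct and follows essentially the same route as the paper: a union/Fr\'echet bound across the $\kappa$ classes, inclusion--exclusion within each independent set, and the key observation that within an independent set the per-step touch events are disjoint so the joint non-touch probability is $(1-\sum_{j\in\mU}\xi_j)^t$. The only difference is cosmetic: you run the full inclusion--exclusion over all subsets of $\mW_i$ and then regroup by whether $\pi_i\in\mU$, whereas the paper pins $\pi_i$ from the start and factors through a conditional probability, yielding the same signed differences.
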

\begin{proof}
	Now we define $n$ event sequences
	\begin{equation}\notag
	\{\mathcal{F}^{t}_1\}_{t=1,2,\dots},\ \dots\ ,\{\mathcal{F}^{t}_n\}_{t=1,2,\dots},
	\end{equation}
	where $\mathcal{F}^{t}_i$ represents the event that the state of node $i$ at time $t$ is unequal to its initial state. Clearly $\mathcal{Q}_t = \bigcap\limits_{i=1}^n \mathcal{F}_i^{t},\ t=1,2,\dots.$
	We have by the Fr{\'e}chet inequalities
	\begin{equation}\label{eq:frechet1}
	\mathbb{P}(\mathcal{Q}_t)\ge\sum\limits_{i=1}^\kappa \mathbb{P}\Big(\bigcap\limits_{j\in\mW_i}\mathcal{F}_j^{t}\Big)-(\kappa-1).
	\end{equation}
	For $i=1,\dots,\kappa$, it is clear by the Inclusion$-$exclusion theorem \cite{allenby2011}
	\begin{equation}
	\begin{aligned}
	&\quad\mathbb{P}\big(\bigcap\limits_{j\in\mW_i}\mathcal{F}_j^{t}\big)=\sum\limits_{\mU\in2^{\mW_i\setminus\{\pi_i\}}}(-1)^{\left|\mU\right|}\mathbb{P}\Big(\big(\bigcap\limits_{j\in\mU} \overline{\mathcal{F}}_j^{t}\big)\mcap \mathcal{F}_{\pi_i}^{t}\Big)\\
	&=\sum\limits_{\mU\in2^{\mW_i\setminus\{\pi_i\}}}(-1)^{\left|\mU\right|}\mathbb{P}\Big(\bigcap\limits_{j\in\mU} \overline{\mathcal{F}}_j^{t}\Big)\cdot
	\mathbb{P}\Big(\mathcal{F}_{\pi_i}^{t}\mid\bigcap\limits_{j\in\mU} \overline{\mathcal{F}}_j^{t}\Big).\label{eq:minusoneU}
	\end{aligned}
	\end{equation}
	It can be noted from the definition of $\xib$ that $\xi_i\in(0,1]$ with $i=1,\dots,n$ represents the probability of the event that node-to-node communication involves node $i$ in a time slot. For each $\mU\in2^{\mW_i\setminus\{\pi_i\}}$, it is known that any two nodes in $\mU$ are not adjacent, and thus
	\begin{align}
	\mathbb{P}\Big(\bigcap\limits_{j\in\mU} \overline{\mathcal{F}}_j^{t}\Big) = (1-\sum\limits_{j\in\mU}\xi_j)^t.\label{eq:prior1}
	\end{align}
	Similarly, there holds
	\begin{equation}\label{eq:joint1}
	\mathbb{P}\Big(\big(\bigcap\limits_{j\in\mU} \overline{\mathcal{F}}_j^{t}\big)\mcap\overline{\mathcal{F}}_{\pi_i}\Big) = (1-\xi_{\pi_i}-\sum\limits_{j\in\mU}\xi_j)^t.
	\end{equation}
	Then (\ref{eq:prior1}) and (\ref{eq:joint1}) yield
	\begin{align}
	\mathbb{P}\Big(\mathcal{F}_{\pi_i}^{t}\mid\bigcap\limits_{j\in\mU} \overline{\mathcal{F}}_j^{t}\Big)&= 1-\mathbb{P}\Big(\overline{\mathcal{F}}_{\pi_i}^{t}\mid\bigcap\limits_{j\in\mU} \overline{\mathcal{F}}_j^{t}\Big)\notag\\
	&= 1-\frac{\mathbb{P}\Big(\big(\bigcap\limits_{j\in\mU} \overline{\mathcal{F}}_j^{t}\big)\mcap\overline{\mathcal{F}}_{\pi_i}\Big)}{\mathbb{P}\Big(\bigcap\limits_{j\in\mU} \overline{\mathcal{F}}_j^{t}\Big)}\notag\\
	&= 1-(1-\frac{\xi_{\pi_i}}{1-\sum\limits_{j\in\mU}\xi_j})^t\label{eq:cond1}.
	\end{align}
	The proof is completed by (\ref{eq:frechet1}), (\ref{eq:minusoneU}), (\ref{eq:prior1}) and (\ref{eq:cond1}).
\end{proof}

We can also quantify the rate of convergence by the following $\epsilon$-encryption time.
\begin{definition} \label{def:rand_encryption_time}
	For any $\epsilon\in(0,1)$, the $\epsilon$-encryption time for an undirected and connected graph $\mG=(\mV,\mE)$ with $n$ nodes and a randomized PPSC gossiping associated with the edge selection probability given in $\Pb\in\R^{n\times n}$ is defined by
	\begin{equation}\notag
	T_\epsilon(\mG,\Pb)=\inf\{t: 1-\mathbb{P}(\mathcal{Q}_t)\le\epsilon\}.
	\end{equation}
\end{definition}
Define $\xi_{\rm m}=\min\limits_i \xi_i$. For the $\epsilon$-encryption time, we have the following proposition.

\medskip

\begin{proposition}\label{prop:encryption_time_bounds}
	For any $\epsilon\in(0,1)$, the $\epsilon$-encryption time associated with graph $\mG$ and matrix $\Pb$ for the random PPSC algorithm satisfies
	\begin{equation}\notag
	\frac{\log\epsilon}{\log(1-\xi_m)} \le T_\epsilon(\mG,\Pb) \le \frac{\log\epsilon-\log n}{\log(1-\xi_m)}.
	\end{equation}
\end{proposition}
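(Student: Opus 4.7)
The plan is to reuse the event sequence $\{\mathcal{F}_i^t\}$ introduced in the proof of Proposition \ref{prop:tighter_bound}, where $\mathcal{F}_i^t$ denotes the event that node $i$'s state differs from its initial value by time $t$, so that $\mathcal{Q}_t = \bigcap_{i=1}^n \mathcal{F}_i^t$. From the definition of $\xi_i$ as the per-slot probability that a gossip interaction involves node $i$, and the independence of the slot-wise selections, one has the clean identity $\mathbb{P}(\overline{\mathcal{F}_i^t}) = (1-\xi_i)^t$ for every node $i$. The upper and lower bounds on $T_\epsilon(\mG,\Pb)$ will then drop out by, respectively, a union bound on $\overline{\mathcal{Q}_t}$ and a single-node lower bound on $\mathbb{P}(\overline{\mathcal{Q}_t})$, combined in each case with a direct inversion of the inequality defining $T_\epsilon$.

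For the upper bound, I would write $\overline{\mathcal{Q}_t} = \bigcup_{i=1}^n \overline{\mathcal{F}_i^t}$, apply the union bound to obtain
$$
1-\mathbb{P}(\mathcal{Q}_t) \;\le\; \sum_{i=1}^n (1-\xi_i)^t \;\le\; n(1-\xi_m)^t,
$$
and then solve $n(1-\xi_m)^t \le \epsilon$ for $t$. Because $\log(1-\xi_m)<0$, this rearranges to $t \ge (\log\epsilon - \log n)/\log(1-\xi_m)$, which gives exactly the upper bound on $T_\epsilon$ stated in the proposition. For the lower bound, let $i^\ast$ be any node attaining $\xi_{i^\ast}=\xi_m$ (its existence is guaranteed by the definition of $\xi_m$), and observe the trivial containment $\overline{\mathcal{F}_{i^\ast}^t}\subseteq \overline{\mathcal{Q}_t}$, hence
$$
1-\mathbb{P}(\mathcal{Q}_t) \;\ge\; (1-\xi_m)^t.
$$
If $t<\log\epsilon/\log(1-\xi_m)$, multiplying by the negative quantity $\log(1-\xi_m)$ reverses the inequality and yields $(1-\xi_m)^t>\epsilon$, so $1-\mathbb{P}(\mathcal{Q}_t)>\epsilon$; by the definition of the infimum, $T_\epsilon(\mG,\Pb)$ must therefore exceed $\log\epsilon/\log(1-\xi_m)$.

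There is no substantive obstacle here: the proposition is really two one-line concentration arguments glued together. The only point deserving a brief justification is the identity $\mathbb{P}(\overline{\mathcal{F}_i^t})=(1-\xi_i)^t$, which follows from the fact that at each time step the selected edge is chosen independently and the probability that node $i$ is either the initiator or the receiver of that gossip equals $\xi_i = \tfrac{1}{n}[(\Pb+\Pb^\top)\1]_i$, exactly as used in the proof of Proposition \ref{prop:tighter_bound}. Everything else is arithmetic on $\log(1-\xi_m)$, keeping careful track of the sign when dividing through.
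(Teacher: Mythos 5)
Your proof is correct and is essentially the paper's argument: the paper bounds $\mathbb{P}(\mathcal{Q}_t)$ below by the Fr\'echet inequality $\mathbb{P}(\mathcal{Q}_t)\ge\sum_i\mathbb{P}(\mathcal{F}_i^t)-(n-1)\ge 1-n(1-\xi_m)^t$, which is exactly your union bound on $\overline{\mathcal{Q}_t}$ in complemented form, and above by $\min_i\mathbb{P}(\mathcal{F}_i^t)=1-(1-\xi_m)^t$, which is your single-node containment. The only difference is presentational (union bound and monotonicity versus citing the Fr\'echet inequalities), plus your explicit inversion of the defining inequality for $T_\epsilon$, which the paper leaves implicit.
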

\begin{proof}
	Let $\mathcal{F}^t_i,\ i=1,\dots,n,\ t=1,2,\dots$ be as defined in the proof of Proposition \ref{prop:tighter_bound}. Then it can be concluded
	\begin{equation}\label{eq:rand_xi_neq}
	\mathbb{P}(\mathcal{F}_i^{t}) = 1-\mathbb{P}(\overline{\mathcal{F}}_i^{t})=1-(1-\xi_i)^t.
	\end{equation}
	By (\ref{eq:rand_xi_neq}) and the Fr{\'e}chet inequalities \cite{frechet1935generalisation}, we have{
		\begin{equation}
		\begin{aligned}
		\mathbb{P}(\mathcal{Q}_t) \ge \sum\limits_{i=1}^n \mathbb{P}(\mathcal{F}_i^{t})-(n-1)= 1-\sum\limits_{i=1}^n(1-\xi_i)^t\ge 1-n(1-\xi_m)^t.\label{eq:rand_encryption_time_lowerbound}
		\end{aligned}
		\end{equation}
		By the Fr{\'e}chet inequalities \cite{frechet1935generalisation}, one also has
		\begin{align}
		\mathbb{P}(\mathcal{Q}_t) \le \min_i\{\mathbb{P}(\mathcal{F}_i^{t})\} = 1-(1-\xi_m)^t. \label{eq:rand_encryption_time_upperbound}
		\end{align}}
	Clearly (\ref{eq:rand_encryption_time_lowerbound}) and (\ref{eq:rand_encryption_time_upperbound}) complete the proof.
\end{proof}

We would like to point out that techniques from optimizing the structure of the work and the selection of the neighbors \cite{boyd-2004,doerr-2012,shiton2016} might significantly accelerate the convergence rate of the algorithm.

\subsection{Resilience vs Privacy   Trade-off}
Throughout the running of the algorithm (\ref{eq:alg1}) with deterministic or randomized edge selection over a network, a circumstance may occur that a node   drops out of the network   at a random time. Let us assume that independently at each time step, each node of the network has a probability $\pdropout>0$ of dropping out. We focus on a particular time instance $\tdropout>0$. Let $\mGrand_t=(\mVrand_t,\mErand_t)$ denote the random network at time $t=0,1,2,\dots$. It is clear that
\begin{align}
\mVrand_{t}&\subset\mVrand_{t-1},\notag\\
\mErand_{t}&=\mErand_{t-1}\setminus\big\{\{i,j\}:i\in\mVrand_{t-1}\setminus\mVrand_{t},j\in\mVrand_{t-1}\big\}.\notag
\end{align}

On one hand, from the point of view of $\Tail(\asf_{\tdropout})$,     nodes in $\mVrand_{\tdropout}$ should have their states sum only yielding   a small change  compared to that at time $\tdropout-1$ due to the node dropout, i.e.,
\begin{equation}\label{eq:expetation_drop_out}
\begin{aligned}
\mathbb{E}\Big|\sum\limits_{i\in\mVrand_{\tdropout}}x_i(\tdropout)-\sum\limits_{i\in\mVrand_{\tdropout}}x_i(\tdropout-1)\Big|^2
\end{aligned}
\end{equation}
is preferred to be as small as possible, where $\mathbb{E}(\cdot)$ is with respect to the randomness from node dropouts. By direct calculation we can see that
\begin{equation}\label{eq:zeta_omega}
\begin{aligned}
\mathbb{E}\Big|\sum\limits_{i\in\mVrand_{\tdropout}}x_i(\tdropout)-\sum\limits_{i\in\mVrand_{\tdropout}}x_i(\tdropout-1)\Big|^2= 2\pdropout(1-\pdropout)\mathbb{E} \Big|x_{\Tail(\asf_{\tdropout})}(\tdropout-1)-\gammab_{\tdropout}\Big|^2.
\end{aligned}
\end{equation}
It worth emphasizing that $\omega_{\tdropout}=x_{\Tail(\asf_{\tdropout})}(\tdropout-1)-\gammab_{\tdropout}$ is also the packet of communication during the node pair interaction at time $\tdropout$, which will be the error added into the network sum if the communication fails at the receiving node $\Head(\asf_{\tdropout})$. Therefore,  $$
\mathfrak{U}_{\rm R}=2\pdropout(1-\pdropout)\mathbb{E}(|x_{\Tail(\asf_{\tdropout})}(\tdropout-1)-\gammab_{\tdropout}|^2)
$$ serves as a natural
network resilience metric. On the other hand,  by receiving the packet $\omega_{\tdropout}$, $\Head(\asf_{\tdropout})$ or a third party can possibly recover the state $x_{\Tail(\asf_{\tdropout})}({\tdropout}-1)$. In that case,  $\Tail(\asf_{\tdropout})$ would hope the following conditional entropy (see e.g., \cite{cover2012})   $$h(x_{\Tail(\asf_{\tdropout})}({\tdropout}-1)\mid\omega_{\tdropout})$$
i.e., the entropy of $x_{\Tail(\asf_{\tdropout})}({\tdropout}-1)$ given  $\omega_{\tdropout}$,
to be as large as possible. As a result,  $$\mathfrak{U}_{\rm P}=h(x_{\Tail(\asf_t)}(t-1)\mid\omega_t)$$
can be a good privacy preservation metric for any time $t$.

With normal distribution assumptions on both the $\beta_i$s and the $\gammab_i$s, $x_{\Tail(\asf_t)}(t-1)$ is normally distributed with its mean and the variance denoted as $\tilde{\mu}$ and $\tilde{\sigma}^2$, respectively. We can now conclude that
\begin{align}
\mathfrak{U}_{\rm R} &= 2\pdropout(1-\pdropout)\big((\tilde{\mu}-\varrho_\gammab)^2 + \tilde{\sigma}^2 + \sigma_{\gammab}^2\big)\label{eq:MR_eq},\\
\mathfrak{U}_{\rm P} &= \frac{1}{2}\log 2\pi e-\frac{1}{2}\log(\frac{1}{\tilde{\sigma}^2}+\frac{1}{\sigma_\gammab^2}).\label{eq:MP_eq}
\end{align}
Thus, a tradeoff between network resilience and privacy preservation can be characterized by
\begin{equation}\label{eq:trade-off_obj}
\begin{aligned}
\argmin_{\varrho_\gammab,\sigma_\gammab^2}\qquad &  \mathfrak{U}_{\rm R} - \nu \mathfrak{U}_{\rm P},
\end{aligned}
\end{equation}
where $\nu\in\R^{+}$ is a parameter that weights the importance of the resilience and the privacy preservation capability. With (\ref{eq:MR_eq}) and (\ref{eq:MP_eq}),  (\ref{eq:trade-off_obj}) yields a unique solution
\begin{equation}\label{eq:trade-off_solution}
\begin{aligned}
\varrho_\gammab &= \tilde{\mu},\\
\sigma_\gammab^2 &= \frac{\tilde{\sigma}}{2}(\sqrt{\tilde{\sigma}^2+\frac{\nu}{\pdropout(1-\pdropout)}}-\tilde{\sigma}).
\end{aligned}
\end{equation}
The relation (\ref{eq:trade-off_solution}) provides an inspiration on how we can generalize the algorithm (\ref{eq:alg1}) to the adaptively generated noise sequence  $\gamma_{t}$.  Letting the random variable $\gammab_t$ have a state-dependent mean and variance related  to the state of the node that generates it, one can achieve a degree of a balance between resilience and privacy preservation.

\section{Conclusions}\label{sec:conclusion}
We have provided gossip-based realizations to PPSC mechanism over a network.  For deterministic node updates, we established   necessary and sufficient conditions on the dependence of two arbitrary nodes' final states,  and characterized the their pairwise dependence with stochastic graphical models. For randomized edge selection, the convergence limit and convergence rate for full-network encryption were established. It was shown that the resulting PPSC mechanism can preserve privacy in terms of input identification and differential privacy when the output is observed.  As an extension, the trade-off between resilience and privacy preservation was studied. Future work includes the design of optimal network structure for information preservation, and study of the fundamental limits between privacy preservation and computation efficiency in distributed algorithms.


%

\medskip

\medskip

\medskip

\section*{Appendices}

\medskip

\subsection*{A. Key Lemmas}
The D-PPSC-Gossip algorithm can be re-expressed by the following equations:
\begin{equation}\label{eq:alg1}
\begin{aligned}
& x_{\Tail(\asf_{t})}(t) = \gammab_{t}\\
& x_{\Head(\asf_{t})}(t) = x_{\Head(\asf_{t})}(t-1) + x_{\Tail(\asf_{t})}(t-1) - \gammab_t\\
& x_{i}(t) = x_{i}(t-1),\ i\in\mV\setminus\{\Tail(\asf_{t}),\Head(\asf_{t})\}
\end{aligned}
\end{equation}
for $t=1,2,\dots,n-1 $. 
Noting that the sum of node states remains  the same over time along the D-PPSC-Gossip algorithm, i.e., $\sum\limits_{i=1}^n x_{i}(t)=\sum\limits_{i=1}^n x_{i}(0)$ for all $0< t\le n-1 $, it is evident that each node $i$'s state can be expressed as
\begin{equation}\label{eq:c_d_def}
x_{i}(t) = \sum\limits_{j=1}^n c_{ij}(t)\beta_j + \sum\limits_{j=1}^{t}d_{ij}(t)\gammab_j,
\end{equation}
where $c_{ij}(t)\in\{0,1\}$ and $d_{ij}(t)\in\{-1,0,1\}$ represents random variable $d_{ij}(t)\gammab_j$ appears in node state $x_i(t)$. In a state $x_i(t)$, $d_{ij}(t)\gammab_j$ with $d_{ij}(t)\neq0$ is a \emph{random component}.
The computation along (\ref{eq:alg1}) is a process of the $x_i(t)$ gaining and losing these random components.
We note a few basic rules for that process.
\begin{enumerate}[(i)]
	\item For any time $t\ge s$, $\gammab_s$ and $-\gammab_s$ belong to different node states, {i.e., appear in the states of two different nodes.}
	\item Any random component can only be gained at a head from a tail along their directed link.
	\item The random components do not change their signs when being gained or lost.
\end{enumerate}
The following lemma illustrates the way that a random variable is passed from the state of one node to that of another.
\begin{lemma}\label{lem:RV_transfer}
	Suppose $n>2$. Let $\tbeg\in\{1,2,\dots,n-1 -1\}$ and $s\in\{1,2,\dots,\tbeg\}$. Let node $i$ and node $j$ satisfy $ d _{i s}(\tbeg)\neq 0$ and $ d _{j s}(\tbeg)= 0$, respectively. Denote $i_0,i_1,\dots,i_l$ as the unique undirected path in $\mT_{\mG}$ connecting node $i$ and $j$ with $i=i_0$ and $j=i_l$. Define $\tend\in\{\tbeg+1,2,\dots,n-1 \}$. Then $d_{i_l s}(\tend)= d_{i_0 s}(\tbeg)$ if and only if the following conditions hold:
	\begin{enumerate}[(i)]
		\item $i_0,i_1,\dots,i_l$ is a directed path.
		
		\item $\asf_{\tbeg}\prec(i_0,i_1)\prec\dots\prec(i_{l-1},i_l)\preceq\asf_{\tend}$.
		
		\item \begin{itemize}
			\item[a)] $(i_0,i^\ast)\prec\asf_{\tbeg}\textnormal{ or }(i_0,i_{1})\preceq(i_0,i^\ast)$ when $(i_0,i^\ast)\in\mET$;
			\item[b)] $(i_b,i^\ast)\prec(i_{b-1},i_{b})\textnormal{ or }(i_b,i_{b+1})\preceq(i_b,i^\ast)$ when $(i_b,i^\ast)\in\mET$ with $0<b<l$;
			\item[c)] $(i_l,i^\ast)\prec(i_{l-1},i_{l})\textnormal{ or }\asf_{\tend}\preceq(i_l,i^\ast)$ when $(i_l,i^\ast)\in\mET$.
		\end{itemize}
	\end{enumerate}
\end{lemma}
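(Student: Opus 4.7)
The plan is to track the trajectory of the specific signed random component $d_{i_0 s}(\tbeg)\cdot\gammab_s$ that appears in $x_{i_0}(\tbeg)$. By the three basic rules recalled just before the lemma, at every subsequent time step this component has exactly one carrier node in $\mV$: the carrier stays put at time $t$ unless it currently sits at $\Tail(\asf_t)$, in which case it is passed to $\Head(\asf_t)$. Because $\mT_{\mG}$ is a tree and each edge carries a single orientation in $\mET$, the carrier's trajectory is in fact a simple path in $\mT_{\mG}$: once the component crosses an edge in its tail-to-head direction, the orientation forbids traversing that edge backward, and the tree structure forbids any alternative return route.

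For necessity, suppose $d_{i_l s}(\tend)=d_{i_0 s}(\tbeg)\neq 0$, so the carrier is at $i_0$ at time $\tbeg$ and at $i_l$ at time $\tend$. The simple-path property forces the trajectory to coincide with the unique undirected $\mT_{\mG}$-path $i_0,i_1,\dots,i_l$; each hop moves tail-to-head in $\mET$, which gives condition (i). Hop $b$ occurs exactly when edge $(i_{b-1},i_b)$ is selected, so the selection indices are strictly increasing and lie strictly after $\asf_{\tbeg}$ and no later than $\asf_{\tend}$, yielding condition (ii). Throughout the residence interval at each $i_b$---delimited by the arrival selection $(i_{b-1},i_b)$ and the departure selection $(i_b,i_{b+1})$, with $\asf_{\tbeg}$ and $\asf_{\tend}$ substituted at the two endpoints of the path---no other tail-edge $(i_b,i^\ast)$ may be selected, for otherwise the tail-update at that step would overwrite $x_{i_b}$ with a fresh random variable and destroy the component. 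Translating this forbidden-interval requirement into the edge-index ordering yields exactly conditions (iii)(a), (iii)(b), and (iii)(c), respectively, for the initial, intermediate, and terminal vertices.

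For sufficiency, I would argue by induction on $b=0,1,\dots,l$ that under (i)--(iii) the carrier of $\pm\gammab_s$ resides at $i_b$ throughout the corresponding residence interval. The inductive step combines condition (iii) (which rules out eviction by a tail-out edge inside the interval) with conditions (i) and (ii) (which legitimize the hop to $i_{b+1}$ at the interval's right endpoint, since the arriving component is present at the tail at the moment of selection and passes to the head by the update rule). Setting $b=l$ and evaluating at $\tend$ gives the desired equality $d_{i_l s}(\tend)=d_{i_0 s}(\tbeg)$.

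The main obstacle is the careful bookkeeping of boundary cases in condition (iii): when $i^\ast$ happens to equal a neighboring path-vertex $i_{b\pm 1}$, the condition is automatic and should be recognized as such; and when $(i_b,i^\ast)$ coincides with one of $\asf_{\tbeg}$, $\asf_{\tend}$, $(i_{b-1},i_b)$, or $(i_b,i_{b+1})$, the distinction between $\prec$ and $\preceq$ must be tracked precisely to match the open-versus-closed endpoints of the residence intervals. Once the carrier-trajectory picture is in place, however, the remaining case analysis is routine.
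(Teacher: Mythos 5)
Your proposal is correct and follows essentially the same route as the paper's proof: track the single carrier of the signed component $d_{i_0 s}(\tbeg)\gammab_s$ as it moves tail-to-head along selected edges, use the uniqueness of paths in the spanning tree to force the trajectory onto $i_0,\dots,i_l$ (necessity of (i)--(ii)) and to show that any tail-out selection inside a residence interval sends the component irretrievably off the path (necessity of (iii)), then run the induction along the path for sufficiency, exactly as the paper's recursive argument does. One small wording slip: a tail-out selection at $i_b$ does not \emph{destroy} the component but transfers it to the neighbor $i^\ast$; the correct conclusion---that it then can never reach $i_l$---is already supplied by your simple-path/no-return observation, so the argument stands.
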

\noindent{\it Proof.} 
The conclusion $d_{i_l s}(\tend)= d_{i_0 s}(\tbeg)\neq0$ describes that the random component $d_{i_l s}(\tend)\gammab_{s}$ is passed from node $i_0$ to node $i_l$ during the time interval $[\tbeg,\tend]$. Before presenting the proof, we provide some intuitive explanation on the three conditions. Condition (i) confirms the orientations of edges in the path $i_0,\dots,i_l$, while the order of the edge selection is given in Condition (ii). Condition (iii) prevents the random component $d_{is}(\tbeg)\gammab_s$ from being passed to the nodes that are not in the path $i_0,\dots,i_l$. Next the sufficiency and necessity of the conditions are proved, respectively.\\
\emph{Sufficiency.}
Let $\asf_{t_k}=(i_k,i_{k+1})$ for $k=0,1,\dots,l-1$. Then Condition (ii) is equivalent to
\[
\tbeg<t_0<t_1<\dots<t_{l-1}\le\tend.
\]
$d_{is}(\tbeg)\neq0$ states that the random variable $d_{is}(\tbeg)\gammab_s$ appears in the node state $x_i(\tbeg)$. Condition (iii) a) describes that $i_0$ is not the tail of any edges $\asf_{t}$ with $\tbeg\le t < t_0$, and hence guarantees that the random components in $x_{i_0}(\tbeg)$, including $d_{i_0s}(\tbeg)\gammab_s$, will be kept in $x_{i_0}(t)$ for all $\tbeg\le t < t_0$.
At time $t_0$, node $i_0$ and node $i_1$ are chosen to mutually communicate and get their states updated according to the algorithm (\ref{eq:alg1}).
As a result, $x_{i_0}(t_0)$ loses the random component $d_{i_0s}(\tbeg)\gammab_s$, which now becomes a component of $x_{i_1}(t_1)$. The similar analysis applied at time $t_1,\dots,t_{l-1}$ recursively shows the random component $d_{i_0s}(\tbeg)\gammab_s$ appears in $x_{i_l}(t_{l-1})$ under Condition (i), (ii) and (iii) b).
Finally, Condition (iii) c) tells that random component $d_{is}(\tbeg)\gammab_s$ will be kept in $x_{i_l}(t)$ for $t_{l-1}\le t\le \tend$,
which completes the proof of sufficiency.\\
\medskip
\emph{Necessity.}
In the following, we prove the necessity of the three conditions. Assume the random component $d_{i_0 s}(\tend)\gammab_{s}$ is passed from node $i_0$ to node $i_l$ during the time interval $[\tbeg,\tend]$. Recall that the transition of the random variable $d_{i_0 s}(\tend)\gammab_{s}$ occurs from the tail of an edge to the head at each time step. Thus one has $(i_0,i_1),\dots,(i_{l-1},i_l)\in\mET$ due to the uniqueness of the undirected path $i_0,i_1,\dots,i_l$ in $\mT_\mG$, and completes the proof of the necessity of Condition (i).\\
According to the orientations of $(i_k,i_{k+1})$, there exist time $\tau_k$ when $d_{i_0 s}(\tend)\gamma_{s}$ appears in the node state $x_{i_k}(\tau_k)$ for $k=0,1,\dots,l$. By the uniqueness of the directed path from $i_0$ to $i_l$, $\tau_k$s can be arranged in the following order
\begin{equation}\label{eq:path_tau_order}
\tau_0<\tau_1<\dots<\tau_l.
\end{equation}
The definition of $\asf_{t_k}$ directly implies that for $k=0,1,\dots,l-1$
\begin{equation}\label{eq:t_k_interval}
\tau_k<t_k\le\tau_{k+1}.
\end{equation}
It is clear (\ref{eq:path_tau_order}) and (\ref{eq:t_k_interval}) yield
\begin{equation}\label{eq:t_k_order}
t_0<t_1<\dots<t_{l-1}.
\end{equation}
It is evident that if $\asf_{t_0}\le\asf_{\tbeg}$, the random component $d_{i_0 s}(\tend)\gamma_{s}$ can never appear in the state of node $i_1$, and thereby the states of $i_2,\dots,i_l$. Thus one has $\asf_{\tbeg}\prec\asf_{t_0}$.
In addition, node $i_l$ cannot gain the random component $d_{i_0 s}(\tend)\gammab_{s}$ when the time $t<t_{l-1}$, which implies $\asf_{t_{l-1}}\preceq\asf_{\tend}$.
Then the necessity of Condition (ii) can be seen from (\ref{eq:t_k_order}), $\asf_{\tbeg}\prec\asf_{t_0}$ and $\asf_{t_{l-1}}\preceq\asf_{\tend}$.\\
Suppose Condition (iii) a) does not hold for contradiction. Then there exists a node set $\mathrm{I}^\ast\subset\mV\setminus\{i_0,\dots,i_l\}$ such that $(i_0,i^\ast)\in\mTG$ and $(i_0,i^\ast)\prec\asf_{t_0}$ for all $i^\ast\in\mathrm{I}^\ast$. As a result, the random component $d_{i_0 s}(\tend)\gamma_{s}$ will be passed to some node in $\mathrm{I}^\ast$, which is impossible to be passed to $i_l$ again. Thus Condition (iii) a) must hold. Similarly, due to the uniqueness of the path between two arbitrary nodes in spanning trees, the random component $d_{i_0 s}(\tend)\gamma_{s}$ must be always held along path $i_0,i_1,\dots,i_l$ during time $\tbeg\le t\le\tend$, which proves the necessity of Condition (iii).
\hfill$\square$

\begin{lemma}\label{lem:final_node_states}
	The terminal state $\betab^\sharp=[\beta^\sharp\ \dots\ \beta^\sharp_n]^\top$ has the following properties.
	\begin{enumerate}[(i)]
		\item Let $s\in\{1,2,\dots,n-1 \}$. Then there exists unique $i,j$ with $i\neq j$ such that $ d _{is}(n-1 )=1$ and $ d _{js}(n-1 )=-1$.
		\item If $\beta^\sharp_i$ and $\beta^\sharp_j$ are dependent, then there exists a unique $s\in\{1,2,\dots,n-1 \}$ that satisfy
		\begin{equation}\notag
		d _{is}(n-1 )d _{js}(n-1 )=-1.
		\end{equation}
	\end{enumerate}
\end{lemma}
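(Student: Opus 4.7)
The plan is to establish (i) by induction on $t$, and then deduce (ii) by combining (i) with an algebraic argument that the columns of $\Db$ are linearly independent.

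For (i), I would induct on $t \in \{s, s+1, \dots, n-1\}$. The base case $t = s$ is immediate: step $s$ of the algorithm sets $x_{\Tail(\asf_s)}(s) = \gammab_s$ and $x_{\Head(\asf_s)}(s) = x_{\Head(\asf_s)}(s-1) + x_{\Tail(\asf_s)}(s-1) - \gammab_s$, producing exactly two nonzero coefficients of $\gammab_s$, namely $+1$ at $\Tail(\asf_s)$ and $-1$ at $\Head(\asf_s)$. For the inductive step, suppose at time $t$ the $\pm 1$ coefficients of $\gammab_s$ sit at distinct nodes $u$ and $v$. Step $t+1$ wipes $\Tail(\asf_{t+1})$'s state and transfers its old contents to $\Head(\asf_{t+1})$, so the only way to destroy the claim is a cancellation at $\Head(\asf_{t+1})$, which requires $\{\Tail(\asf_{t+1}), \Head(\asf_{t+1})\} = \{u, v\}$. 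This is the place that requires care, and I would rule it out by a tree-structure argument: the $+1$ coefficient reached $u$ from $\Tail(\asf_s)$ through a sequence of directed-tree moves, so $u$ lies in the component of $\mT_\mG$ containing $\Tail(\asf_s)$ after removing the underlying edge of $\asf_s$, and symmetrically $v$ lies in the other component; the only edge of $\mT_\mG$ joining these two components is the underlying edge of $\asf_s$ itself, so the identification $\{\Tail(\asf_{t+1}), \Head(\asf_{t+1})\} = \{\Tail(\asf_s), \Head(\asf_s)\}$ would force $\asf_{t+1} = \asf_s$ (as each directed edge in $\mTG$ appears as a unique $\asf_\tau$), contradicting $t+1 > s$.

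For (ii), existence is immediate: if $\beta^\sharp_i$ and $\beta^\sharp_j$ are dependent they must share some $\gammab_s$ with nonzero coefficient in both expansions, because otherwise they are functions of disjoint sets of the independent $\gammab_s$'s and hence independent; by (i), the unique pair of nodes carrying nonzero coefficients of this $\gammab_s$ is $\{i, j\}$ with coefficients $\pm 1$, yielding $d_{is}(n-1) d_{js}(n-1) = -1$.

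The main obstacle is the uniqueness in (ii), for which I plan to show that the $n-1$ columns of $\Db$ are linearly independent in $\R^n$. Writing $\Ab_k = \Ib - \vb_k \eb_{\Tail(\asf_k)}^\top$, where $\eb_j$ denotes the $j$-th standard basis vector, the recursion $\Db_{\cdot k} = \Ab_{n-1} \cdots \Ab_{k+1} \vb_k$ expands as $\Db_{\cdot k} = \vb_k + \sum_{\tau = k+1}^{n-1} \alpha_{\tau,k} \vb_\tau$ for some $\alpha_{\tau,k} \in \{-1, 0, 1\}$, because each application of $\Ab_\tau$ to a linear combination of $\vb_1, \dots, \vb_{\tau-1}$ only adds a scalar multiple of $\vb_\tau$ while leaving the earlier $\vb_j$-coefficients untouched. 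Since $\vb_1, \dots, \vb_{n-1}$ form a basis of $\{\wb \in \R^n : \1^\top \wb = 0\}$ (being incidence vectors of the spanning tree $\mT_\mG$ of the connected graph $\mG$), the matrix whose $k$-th column is the $\vb$-expansion of $\Db_{\cdot k}$ is upper triangular with unit diagonal, hence invertible, so the columns of $\Db$ are themselves linearly independent in $\R^n$. If two distinct $s_1 \neq s_2$ produced the same unordered pair $\{i, j\}$, by (i) both $\Db_{\cdot s_1}$ and $\Db_{\cdot s_2}$ would be incidence vectors of $\{i, j\}$, differing at most by sign, hence linearly dependent, contradicting the column independence just established. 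Uniqueness follows.
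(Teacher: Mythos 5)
Your proof is correct, but it reaches the two key conclusions by a genuinely different route than the paper. For part (i), the paper tracks how $\gammab_s$ and $-\gammab_s$ migrate (a random component is lost only when its holder becomes the tail of a later edge, and it then reappears with unchanged sign at the head), with the non-collision of the $+1$ and $-1$ carriers essentially asserted as a structural rule of the process; your two-component invariant --- the $+1$ carrier stays on the $\Tail(\asf_s)$ side and the $-1$ carrier on the $\Head(\asf_s)$ side of $\mT_{\mG}$ minus the underlying edge of $\asf_s$, which is never used again --- makes that non-cancellation step explicit and airtight. For the uniqueness in part (ii), the paper argues combinatorially through Lemma \ref{lem:RV_transfer}: the transfer of $\pm\gammab_s$ to $i$ and $j$ forces $\asf_s$ to be a specific edge on the unique tree path joining $i$ and $j$, and a case analysis on the positions of $\Tail(\asf_s)$, $\Head(\asf_s)$ shows that edge is uniquely determined. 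You instead prove that $\Db$ has full column rank via the factorization $\Ab_k=\Ib-\vb_k\eb_{\Tail(\asf_k)}^\top$, which expresses each column of $\Db$ as $\vb_k$ plus a combination of later $\vb_\tau$'s, i.e.\ a unitriangular coefficient matrix over the linearly independent tree-incidence vectors $\vb_1,\dots,\vb_{n-1}$; two indices producing the same carrier pair $\{i,j\}$ would give proportional columns, a contradiction. Both arguments are sound. The paper's route yields finer structural information (which edge on the $i$--$j$ path carries the shared randomness, feeding Theorems \ref{thm:dependence} and \ref{thm:dependence_directed_path}), while your algebraic route is shorter and has a bonus: it establishes the full column rank of $\Db$ directly, which the paper only proves later (Lemma \ref{lem:D}) by a leaf-peeling argument that leans on the tree structure of the graphical model, so your factorization would also simplify that part of the differential-privacy analysis.
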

\noindent{\it Proof.} 
(i) For the endpoints of an arbitrary edge $\asf_{s}$ with $s\in\{1,2,\dots,n-1 \}$, the algorithm (\ref{eq:alg1}) yields at time $\hat{t}$
\begin{equation}\label{eq:new_rv_pairs}
\begin{aligned}
x_{\Tail(\asf_{s})}(\hat{t}) &= \gammab_{s}\\
x_{\Head(\asf_{s})}(\hat{t}) &= x_{\Tail(\asf_{s})}(s-1) + x_{\Head(\asf_{s})}(s-1) - \gammab_{s}.
\end{aligned}
\end{equation}
From (\ref{eq:new_rv_pairs}), we see that a pair of random variables $\gammab_{s},-\gammab_{s}$ are added to the states of $\Tail(\asf_{s})$ and $\Head(\asf_{s})$ respectively at time $s$. Next we analyze the random variable $\gammab_{s}$ in $x_{\Tail(\asf_{s})}(t)$ for $t>s$. If $\Tail(\asf_{s})\neq\Tail(\asf_t)$ and $\Tail(\asf_{s})\neq\Head(\asf_t)$ for all $t>s$, then it is clear that $\gammab_{s}$ remains in $x_{\Tail(\asf_{s})}(t)$ for all $t>s$. For any time $t>s$ with $\Tail(\asf_{s})=\Head(\asf_t)$, the random variables held by $x_{\Tail(\asf_{s})}(t-1)$ are still kept in $x_{\Tail(\asf_{s})}(t)$. Therefore, the only way that the state of node $x_{\Tail(\asf_{s})}$ loses $\gammab_{s}$ is to let node $\Tail(\asf_{s})$ be the tail of edge $\asf_t$ for some $t>s$. Suppose there exists a nonempty set $\mathcal{T}$ such that $\Tail(\asf_{s})=\Tail(\asf_t)$ for all $t\in\mathcal{T}$. Define $\bar{t}=\min\{t:t\in\mathcal{T}\}$. Then by the algorithm (\ref{eq:alg1})
\begin{equation}\notag
\begin{aligned}
x_{\Tail(\asf_{s})}(\bar{t}) &= \gammab_{\bar{t}}\\
x_{\Head(\asf_{s})}(\bar{t}) &= x_{\Tail(\asf_{s})}(\bar{t}-1) + x_{\Head(\asf_{\bar{t}})}(\bar{t}-1) - \gammab_{\bar{t}},
\end{aligned}
\end{equation}
from which we see that the random variable $\gammab_{s}$ transfers to $x_{\Head(\asf_{\bar{t}})}(\bar{t})$ without changing its sign. It can be concluded by applying the same analysis that $\gammab_s$ exists in one and only one of all node states for all $t\ge s$. Analogously, we can easily know that $-\gammab_{\hat{t}}$ has the same properties. This completes the proof of (i).\\
(ii) Since $\xb_i(n-1 )$ and $\xb_j(n-1 )$ are dependent, there exists a set $\mathcal{S}\subset\{1,2,\dots,n-1 \}$ such that $d_{is}(n-1 )\gammab_s$ appears in $\xb_i(n-1 )$ and $d_{js}(n-1 )\gammab$ appears in $\xb_j(n-1 )$ for all $s\in\mathcal{S}$. Let $i_0,i_1,\dots,i_l$ denote the unique undirected path in $\mT_\mG$ that connects node $i$ and node $j$ with $i_0=i$ and $i_l=j$. Thus we only need to prove $|\mathcal{S}|=1$. For every $s\in\mathcal{S}$, under the algorithm (\ref{eq:alg1}), $\gammab_s$ is held by $x_{\Tail(\asf_s)}(s)$ and $-\gammab_s$ is held by $x_{\Head(\asf_s)}(s)$. Without loss of generality, we assume $x_{i_0}(n-1 )$ holds $\gammab_s$ and $x_{i_l}(n-1 )$ holds $-\gammab_s$. If $s=n-1 $, then it is necessary $\Tail(\asf_s)=i_0$ and $\Head(\asf_s)=i_l$. If $s\neq n-1 $, at most one of $\Tail(\asf_s)=i_0$ and $\Head(\asf_s)=i_l$ holds. When $\Tail(\asf_s)=i_0$ and $\Head(\asf_s)\neq i_l$, $-\gammab_s$ has to transfer from to $x_{\Head(\asf_s)}(s)$ from $x_{i_l}(n-1 )$. By Lemma \ref{lem:RV_transfer}, the process of transfer requires the path that starts from $\Head(\asf_s)$ and ends at $i_l$ is a directed path. In this case, $i_0,i_1,\dots,i_l$ is a directed path and $\asf_s=(i_0,i_1)$. Similarly, $i_l,i_{l-1},\dots,i_0$ is a directed path and $\asf_s=(i_l,i_{l-1})$ when $\Tail(\asf_s)\neq i_0$ and $\Head(\asf_s)= i_l$. In a general case in which $\Tail(\asf_s)\neq i_0,\Head(\asf_s)\neq i_l$, the paths that connect $\Tail(\asf_s)$ and $i_0$, $\Head(\asf_s)$ and $i_l$ are both directed paths by Lemma \ref{lem:RV_transfer}, leading to that $\asf_s$ is in $i_0,i_1,\dots,i_l$. In conclusion, $\asf_s$ must be in path $i_0,i_1,\dots,i_l$ and the cases studied above can be summarized as follows.
\begin{enumerate}[(i)]
	\item $\Tail(\asf_s)=i_0,\Head(\asf_s)=i_l\Rightarrow \asf_s=(i_0,i_l)$ is an edge in $\mTG\Rightarrow$ $\asf_s$ is unique;
	\item $\Tail(\asf_s)=i_0,\Head(\asf_s)\neq i_l\Rightarrow i_0,i_1,\dots,i_l$ is directed path and $\asf_s=(i_0,i_1)\Rightarrow$ $\asf_s$ is unique;
	\item $\Tail(\asf_s)\neq i_0,\Head(\asf_s)= i_l\Rightarrow i_l,i_{l-1},\dots,i_0$ is directed path and $\asf_s=(i_l,i_{l-1})\Rightarrow$ $\asf_s$ is unique;
	\item $\Tail(\asf_s)\neq i_0,\Head(\asf_s)\neq i_l\Rightarrow $ the paths from $\Tail(\asf_s)$ to $i_0$ and from $\Head(\asf_s)$ to $i_l$ are directed paths and $\asf_s=(i_l,i_{l-1})\Rightarrow$ $\asf_s$ is unique.
\end{enumerate}
This completes the proof of (ii). \hfill$\square$

\subsection*{B. Proof of Theorem \ref{thm0}}

Note that the matrix $\Cb$ in (\ref{eq:linear_final_initial}) has its $\Tail(\asf_{n-1 })$--th row being zero. Thus $\rank(\Cb)<n$, which implies all $\etab\in\ker(\Cb)$ satisfy
\begin{equation}\notag
{\rm PDF}(\betab^\sharp\mid\betab)={\rm PDF}(\betab^\sharp\mid\betab+\etab).
\end{equation}
This directly shows that $\betab$ is non-identifiable and completes the proof.

\subsection*{C. Proof of Theorem \ref{thmdp}}

Recall that the output mechanism of the D-PPSC-Gossip algorithm can be described by (\ref{eq:linear_final_initial}).
Let $\mG[\mathrm{S}]$ with $\mathrm{S}\subset\mV$ be the graph whose node set is $\mV$ and edge set is the subset of edges in $\mG$ with both endpoints in $\mV$. For the matrix $\Db\in\R^{n\times(n-1)}$, we provide the following lemma.
\begin{lemma}\label{lem:D}
	$\Db\in\R^{n\times(n-1)}$ has full column rank.
\end{lemma}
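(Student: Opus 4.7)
My plan is to identify $\Db$ with the signed (oriented) incidence matrix of a graph on $\mV$ and then argue that this graph is a tree, so that the incidence matrix has the maximal possible rank $n-1$.

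First I would use Lemma \ref{lem:final_node_states}(i): for each $s\in\{1,\dots,n-1\}$ the $s$-th column $\db_s$ of $\Db$ (which is the coefficient vector of $\gammab_s$ in $\betab^\sharp$) has exactly one entry equal to $+1$, exactly one entry equal to $-1$, and zeros elsewhere. Therefore each column encodes an unordered edge $e_s=\{i_s^+,i_s^-\}$ between the two nodes in whose final states $\pm\gammab_s$ persist. Let $\tilde{\mG}=(\mV,\{e_1,\dots,e_{n-1}\})$ be the resulting (a priori multi-)graph on $n$ vertices with $n-1$ edges, so that $\Db$ is precisely its oriented incidence matrix.

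Next I would compute $\Db\Db^\top$ directly from this incidence structure: $[\Db\Db^\top]_{ii}$ equals the number of columns with a nonzero entry at $i$ (the degree of $i$ in $\tilde{\mG}$), while for $i\neq j$, $[\Db\Db^\top]_{ij}=\sum_s [\Db]_{is}[\Db]_{js}=-\#\{s:\{i,j\}=e_s\}$, since for every $s$ producing a nonzero contribution the two nonzero entries have opposite signs. Hence $\Db\Db^\top$ is the combinatorial Laplacian of $\tilde{\mG}$ (as a multigraph). Since $\gammab_1,\dots,\gammab_{n-1}$ are i.i.d. with variance $\sigma_{\gammab}^2$ and $\betab^\sharp-\Cb\betab=\Db\gammab$, we also have $\Sigmab_{\betab^\sharp}=\sigma_{\gammab}^2\,\Db\Db^\top$.

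Invoking Theorem \ref{thm:Sigma}, $\Sigmab_{\betab^\sharp}/\sigma_{\gammab}^2$ is the Laplacian of the tree $\mGM_{\betab^\sharp}$. Comparing the two Laplacian identifications forces $\tilde{\mG}=\mGM_{\betab^\sharp}$ (in particular $\tilde{\mG}$ is simple and is a tree on $n$ vertices); equivalently one can observe that a tree's Laplacian has rank $n-1$, so $\rank(\Db\Db^\top)=n-1$, which gives $\rank(\Db)\ge n-1$, and since $\Db$ has only $n-1$ columns, $\rank(\Db)=n-1$, i.e., $\Db$ has full column rank.

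The main obstacle is the appeal to Theorem \ref{thm:Sigma}, which could create a circular dependence if the proof of Theorem \ref{thm:Sigma} were to rely on this lemma. To sidestep this, I would have a direct back-up argument: since $\vb_t=\eb_{\Tail(\asf_t)}-\eb_{\Head(\asf_t)}$ spans $\ker(\Ab_t)$, the only way a column of $[\Ab_{t-1}\cdots\Ab_2\vb_1,\ldots,\vb_{t-1}]$ could be annihilated by $\Ab_t$ is if it already represented the oriented edge $\asf_t$; one then argues inductively from the $(n-1)$-edge spanning tree $\mTG$, together with Lemma \ref{lem:RV_transfer}, that no such collision can occur, so each $\Ab_t$ preserves linear independence of the accumulated columns and the full collection at time $n-1$ is independent. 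Either route closes out the lemma.
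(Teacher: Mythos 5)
Your main argument is correct, and it takes a genuinely different route from the paper's, even though both proofs lean on the same upstream facts. The paper proves linear independence of the columns of $\Db$ directly by a leaf-stripping induction: using Lemma \ref{lem:final_node_states} it notes that a degree-one node of the tree $\mGM_{\betab^\sharp}$ gives a row of $\Db$ with a single nonzero entry, forces the corresponding coefficient in a vanishing linear combination to be zero, deletes those leaves, observes the induced graph is again a tree, and recurses until all coefficients vanish. You instead read $\Db$ as the oriented incidence matrix of the $(n-1)$-edge graph determined by where $\pm\gammab_s$ persist (Lemma \ref{lem:final_node_states}(i)), compute $\Db\Db^\top=\Sigmab_{\betab^\sharp}/\sigma_{\gammab}^2$, invoke Theorem \ref{thm:Sigma} to identify this with the Laplacian of a tree on $n$ vertices (rank $n-1$), and conclude via $\rank(\Db)=\rank(\Db\Db^\top)=n-1$. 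Your circularity concern is moot: the paper's own proof of this lemma also presupposes that $\mGM_{\betab^\sharp}$ is a tree, and the proof of Theorem \ref{thm:Sigma} uses only Lemma \ref{lem:final_node_states} and Theorems \ref{thm:dependence}--\ref{thm:dependence_directed_path}, never Lemma \ref{lem:D}, so invoking it is as legitimate for you as it is for the authors. What your route buys is brevity and a clean algebraic identity (incidence matrix, Gram matrix, tree Laplacian rank) in place of the explicit combinatorial elimination; what the paper's route buys is a proof whose mechanics are elementary row-by-row reasoning on $\Db$ itself. Your ``back-up'' incidence/kernel induction with the $\Ab_t$'s is correctly set up (indeed $\ker(\Ab_t)=\spa\{\vb_t\}$), but as written it is only a sketch -- the no-collision inductive step is exactly the nontrivial content -- so you should regard the Laplacian-rank argument, not the back-up, as your actual proof; fortunately it stands on its own.
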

\begin{proof}
	Let $\mathbf{d}_1,\dots,\mathbf{d}_{n-1}$ be the columns of $\Db$ from left to right. We now show that $\mathbf{d}_1,\dots,\mathbf{d}_{n-1}$ are linearly independent. Consider the equation with respect to $a_1,\dots,a_{n-1}\in\R$:
	\begin{equation}\label{eq:ind_1}
	\sum\limits_{i=1}^{n-1}a_i\mathbf{d}_i = 0.
	\end{equation}
	Let the nodes in the graphical model $\mGM_{\betab^\sharp}$ with degree one form a set $\mV_1\subset\mV$. For each $j\in\mV_1$, by Lemma \ref{lem:final_node_states}, $\Db$ has a $1$ and a $-1$ in the $j$--th column, denoted as $[\Db]_{i^1_jj}=1$ and $[\Db]_{i^2_jj}=-1$, respectively, while all the other elements in that column are zeros. Since the nodes $j\in\mV_1$ have degree one, $[\Db]_{i_jj}$ with $i_j=i^1_j$ or $i_j=i^2_j$ is the only nonzero element in the $j$--th row of $\Db$. Therefore, (\ref{eq:ind_1}) admits $a_{i_j}=0$ for all $j\in\mV_1$ and (\ref{eq:ind_1}) gives the following equation with respect to $a_{i_j},j\in\mV\setminus\mV_1$:
	\begin{equation}\label{eq:ind_2}
	\sum\limits_{i=1,i\neq i_j,j\in\mV_1}^{n-1} a_i\mathbf{d}_i = 0.
	\end{equation}
	Evidently, $\mGM_{\betab^\sharp}[\mV\setminus\mV_1]$ is still a tree. We let $\mV_2$ denote the set of nodes with degree one in $\mGM_{\betab^\sharp}[\mV\setminus\mV_1]$. Then we can continue to repeat the process from (\ref{eq:ind_1}) to (\ref{eq:ind_2}), and finally obtain $a_1=\dots=a_{n-1}=0$, which shows that $\mathbf{d}_1,\dots,\mathbf{d}_{n-1}$ are linearly independent and completes the proof.
	%
\end{proof}

We are now ready to prove the desired theorem. 	Consider two adjacent initial conditions $\betab,\betab^\prime\in\R^n$. For any $\1^\top\betab^\sharp=\1^\top\betab=\1^\top\betab^\prime$, there holds
\begin{align}\label{eq:dp_1}
\frac{\Pr\big(\mathscr{P}_{\rm D-PPSC}(\betab)=\betab^\sharp\big)}{\Pr\big(\mathscr{P}_{\rm D-PPSC}(\betab^\prime)=\betab^\sharp\big)}=\frac{\Pr(\Cb\betab+\Db\gammab=\betab^\sharp)}{\Pr(\Cb\betab^\prime+\Db\gammab=\betab^\sharp)}.
\end{align}
By Lemma \ref{lem:D}, $(\Db^\top\Db)^{-1}\Db^\top$ is the left inverse $\Db$. By (\ref{eq:dp_1}), we have
\begin{align}\label{eq:dp_2}
\frac{\Pr\big(\mathscr{P}_{\rm D-PPSC}(\betab)=\betab^\sharp\big)}{\Pr\big(\mathscr{P}_{\rm D-PPSC}(\betab^\prime)=\betab^\sharp\big)}=\frac{{\rm PDF}\big((\Db^\top\Db)^{-1}\Db^\top(\betab^\sharp-\Cb\betab)\big)}{{\rm PDF}\big((\Db^\top\Db)^{-1}\Db^\top(\betab^\sharp-\Cb\betab^\prime)\big)}.
\end{align}
In (\ref{eq:dp_2}), if $\betab-\betab^\prime\in\ker(\Cb)$, then $\frac{\Pr\big(\mathscr{P}_{\rm D-PPSC}(\betab)=\betab^\sharp\big)}{\Pr\big(\mathscr{P}_{\rm D-PPSC}(\betab^\prime)=\betab^\sharp\big)}=1<e^\epsilon$ for any $\epsilon>0$, which preserves differential privacy. Next we suppose $\betab-\betab^\prime\notin\ker(\Cb)$. By (\ref{eq:dp_2}), we have
\begin{align}
&\quad\frac{\Pr\big(\mathscr{P}_{\rm D-PPSC}(\betab)=\betab^\sharp\big)}{\Pr\big(\mathscr{P}_{\rm D-PPSC}(\betab^\prime)=\betab^\sharp\big)}\notag\\
&\le \exp\bigg(\frac{\big\|(\Db^\top\Db)^{-1}\Db^\top\Cb(\betab-\betab^\prime)\big\|_1}{v}\bigg)\notag\\
&\le\exp\bigg(\frac{\sqrt{n-1}\big\|(\Db^\top\Db)^{-1}\big\|_2\big\|\Db^\top\big\|_1\big\|\Cb\big\|_1\big\|\betab-\betab^\prime\big\|_1}{v}\bigg).\label{eq:dp_3}
\end{align}
Evidently, $\|\Db^\top\|_1$ is equal to the maximum degree of $\mGM_{\betab^\sharp}$ according to the definition of $\Db$ and $\mGM_{\betab^\sharp}$. Since each column of $\Cb$ is equal to one, $\|\Cb\|_1=1$. Then (\ref{eq:dp_3}) gives
\begin{align}
\frac{\Pr\big(\mathscr{P}_{\rm D-PPSC}(\betab)=\betab^\sharp\big)}{\Pr\big(\mathscr{P}_{\rm D-PPSC}(\betab^\prime)=\betab^\sharp\big)}\le\exp\bigg(\frac{\delta\sqrt{n-1}\Delta(\mGM_{\betab^\sharp})}{v\left|\sigm(\Db^\top\Db)\right|}\bigg),\notag
\end{align}
which completes the proof.

\subsection*{D. Proof of Theorem \ref{thm:dependence}}
We start the proof by showing the sufficiency of Condition (i), (ii) and (iii). Without loss of generality, we assume $(i_p,i_{p+1})\prec(i_p,i_{p-1})$. Let $\asf_{t_p}=(i_p,i_{p+1})$. According to Algorithm (\ref{eq:alg1}), the node states $x_{i_p}(t_p)$ and $x_{i_{p+1}}(t_p)$ are given by
\begin{align}
x_{i_p}(t_p) &= \gamma_{t_p}\notag\\
x_{i_{p+1}}(t_p) &= x_{i_p}(t_p-1) + x_{i_{p+1}}(t_p-1) - \gamma_{t_p}. \notag
\end{align}
As specified by Condition (i), $i_p,i_{p-1},\dots,i_0$ is a directed path, which satisfies Condition (i) of Lemma \ref{lem:RV_transfer}. Condition (ii) gives
\begin{equation}\notag
\asf_{t_p}\prec(i_p,i_{p-1})\prec\dots\prec(i_1,i_0)\preceq\asf_{n-1 },
\end{equation}
satisfying Condition (ii) of Lemma \ref{lem:RV_transfer}. In addition, Condition (iii) is equivalent to Condition (iii) of Lemma \ref{lem:RV_transfer} for path $i_p,i_{p-1},\dots,i_0$. Thus Lemma \ref{lem:RV_transfer} shows that random variable $\gamma_{t_p}$ appears in node state $x_{i_0}(n-1 )$, i.e.,
\begin{equation}\label{eq:i_0_t_p}
d_{i_0 t_p}(n-1 )=1.
\end{equation}
Analogously, three conditions of Lemma \ref{lem:RV_transfer} are met for path $i_{p+1},i_{p+2},\dots,i_l$, which yields
\begin{equation}\label{eq:i_l_t_p}
d_{i_l t_p}(n-1 )=-1.
\end{equation}
Evidently, (\ref{eq:i_0_t_p}) and (\ref{eq:i_l_t_p}) make it sufficient for node states $x_{i_0}(n-1 )$ and $x_{i_l}(n-1 )$ to be dependent.
In the following, we prove the necessity of Condition (i), (ii) and (iii).\\
Necessity of (i). Since $x_{i_0}(n-1 )$ and $x_{i_l}(n-1 )$ are dependent, there exist random variables $d_{i_0 t_p}(n-1 )\gamma_{t_p}$ and $d_{i_l t_p}(n-1 )\gamma_{t_p}$ that appear in node states $x_{i_0}(n-1 )$ and $x_{i_l}(n-1 )$, respectively. Algorithm (\ref{eq:alg1}) gives
\begin{align}
x_{\Tail(\asf_{t_p})}(t_p) &= \gamma_{t_p}\notag\\
x_{\Head(\asf_{t_p})}(t_p) &= x_{\Tail(\asf_{t_p})}(t_p-1) + x_{\Head(\asf_{t_p})}(t_p-1) - \gamma_{t_p}. \notag
\end{align}
We suppose that $\gamma_{t_p}$ and $-\gamma_{t_p}$ transfer to $x_{\Tail(\asf_{t_p})}(n-1 )$ and $x_{\Head(\asf_{t_p})}(n-1 )$, respectively, i.e., $d_{i_0 t_p}=1$ and $d_{i_l t_p}=-1$. Due to the uniqueness of paths in spanning trees, nodes $\Tail(\asf_{t_p})$ and $\Tail(\asf_{t_p})$ are in path $i_0,i_1,\dots,i_l$. Let $i_p=\Tail(\asf_{t_p})$. Condition (i) of Lemma \ref{lem:RV_transfer} shows that $i_p,i_{p-1},\dots,i_0$ and $i_p,i_{p+1},\dots,i_l$ are directed paths. In addition, $i_p\neq i_0$ and $i_p\neq i_l$ because there exists no directed path that connects node $i$ and node $j$. The necessity of Condition (i) can be similarly proved, provided that $-\gamma_{t_p}$ and $\gamma_{t_p}$ transfer to $x_{\Tail(\asf_{t_p})}(n-1 )$ and $x_{\Head(\asf_{t_p})}(n-1 )$, respectively.\\
Necessity of (ii). Without loss of generality, we assume $(i_p,i_{p+1})\prec(i_p,i_{p-1})$. Now we prove the necessity of Condition (ii). Since the random variable $d_{i_0 t_p}(n-1 )\gamma_{t_p}$ transfers from $x_{i_p}(t_p)$ to $x_{i_0}(n-1 )$, Lemma \ref{lem:RV_transfer} provides
\begin{equation}\label{eq:i_p_i_0}
(i_p,i_{p-1})\prec\dots\prec(i_1,i_0).
\end{equation}
Similarly, random variable $d_{i_l t_p}(n-1 )\gamma_{t_p}$ transfers from $x_{i_{p+1}}(t_p)$ to $x_{i_l}(n-1 )$ assures
\begin{equation}\label{eq:i_p+1_i_l}
(i_p,i_{p+1})\prec\dots\prec(i_{l-1},i_l).
\end{equation}
Clearly (\ref{eq:i_p_i_0}) and (\ref{eq:i_p+1_i_l}) shows the necessity of Condition (ii).
Necessity of (iii). We finally prove Condition (iii) is necessary for the dependence result. We have shown above that if the node states $x_{i_0}(n-1 )$ and $x_{i_l}(n-1 )$ are dependent, the random variables $d_{i_0 t_p}(n-1 )\gamma_{t_p}$ and $d_{i_l t_p}(n-1 )\gamma_{t_p}$ transfer to $x_{i_0}(n-1 )$ and $x_{i_l}(n-1 )$, respectively. Thus Condition (iii) of Lemma \ref{lem:RV_transfer} are necessarily met on both path $i_p,i_{p-1},\dots,i_0$ and path $i_{p+1},i_{p+2},\dots,i_l$, which is equivalent to Condition (iii). Now the necessity of all three conditions is proved.

\subsection*{E. Proof of Theorem \ref{thm:dependence_directed_path}}
First we focus on   proving the sufficiency part of the statements. Let $\asf_{t_k}=(i_k,i_{k+1})$ for $k=0,1,\dots,l-1$. According to the algorithm (\ref{eq:alg1}), information transmission occurs on edge $(i_0,i_1)$ at time $t_0$. Then it follows
\begin{align}
x_{i_0}(t_0) &= \gammab_{t_0}\notag\\
x_{i_1}(t_0) &= x_{i_0}(t_0-1) + x_{i_1}(t_0-1) - \gammab_{t_0}. \label{eq:i_1_hold_-gamma_t_0}
\end{align}
It is seen from (\ref{eq:i_1_hold_-gamma_t_0}) that the random variables $\gammab_{t_0}$ and $-\gammab_{t_0}$ are held by the node states $x_{i_0}(t_0)$ and $x_{i_1}(t_0)$, respectively. Evidently, Condition (ii) specifies that the endpoints of all edges not equal to $(i_0,i_1)$ with their tail being node $i_0$ exchange information according to the algorithm (\ref{eq:alg1}) prior to $(i_0,i_1)$. Thus
\begin{equation}\label{eq:i_0_gamma_t_0}
d_{i_0 t_0}(n-1 )=1.
\end{equation}
Since $i_0,i_1,\dots,i_l$ is a directed path, Lemma \ref{lem:RV_transfer} provides that Condition (i) and (ii) guarantees that $-\gammab_{t_0}$ transfers to $x_{i_l}(n-1 )$, i.e.,
\begin{equation}\label{eq:i_l_-gamma_t_0}
d_{i_l t_0}(n-1 )=-1.
\end{equation}
(\ref{eq:i_0_gamma_t_0}) and (\ref{eq:i_l_-gamma_t_0}) clearly show that node states $x_{i_0}(n-1 )$ and $x_{i_l}(n-1 )$ are dependent.

Now we prove the necessity of these two conditions. Suppose $x_{i_0}(n-1 )$ and $x_{i_l}(n-1 )$ are dependent. Then there exist random variables $d_{i_0 t_p}(n-1 )\gammab_{t_p}$ and $d_{i_l t_p}(n-1 )\gammab_{t_p}$ that appear in $x_{i_0}(n-1 )$ and $x_{i_l}(n-1 )$, respectively. It is clear that the nodes $\Tail(\asf_{t_p})$ and $\Head(\asf_{t_p})$ are in the path $i_0,i_1,\dots,i_l$ because of the path uniqueness in spanning trees. However, it is impossible that $\Tail(\asf_{t_p})=i_{\hat{k}}$ and $\Head(\asf_{t_p})=i_{\hat{k}+1}$ for any $\hat{k}\in\{1,\dots,l-1\}$, because $i_{\hat{k}},i_{\hat{k}-1},\dots,i_0$ is necessarily a directed path by Lemma \ref{lem:RV_transfer}. Hence $\Tail(\asf_{t_p})=i_0$ and $\Head(\asf_{t_p})=i_1$. In addition, $d_{i_0 t_p}=1$ and $d_{i_1 t_p}=-1$. Finally the random variable $-\gammab_{t_p}$ becomes a component of $x_{i_l}(n-1 )$. Thus by Lemma \ref{lem:RV_transfer}, it is necessary for Condition (i) and (ii) to hold, which completes the proof.

\subsection*{F. Proof of Theorem \ref{thm:Sigma}}

Lemma \ref{lem:final_node_states} (i) implies that at most $n-1 $ pairs of node states in $x_1(n-1 ),x_2(n-1 ),\dots,x_n(n-1 )$ are dependent. Since it is described in Lemma \ref{lem:final_node_states} (ii) that every pair of dependent final node states possess only one of $\pm\gammab_1,\dots,\pm\gammab_{n-1 }$, there are exactly $n-1 $ pairs of dependent final node states. In addition, it can be calculate the covariance of two dependent states is $-\sigma^2$. Thus $|\mEM_{\betab^\sharp}|=n-1 $. Second, we complete the proof that $\mGM_{\betab^\sharp}$ is a tree by showing $\mGM_{\betab^\sharp}$ is connected. We know for an arbitrary node $i\in\mV$, there exists $t\in\{1,2,\dots,n-1 \}$ such that $i$ is an endpoint of $\asf_t$. Thus the algorithm (\ref{eq:alg1}) guarantees that each $x_i(n-1 )$ holds at least one of $\gammab_1,\dots,\gammab_{n-1 }$ and thereby $\mGM_{\betab^\sharp}$ has no isolated nodes. Assume, for contradiction, that $\mGM_{\betab^\sharp}$ has $r$ connected components with $r>1$. Let $\mGM_{\mathrm{S}_1}$ and $\mGM_{\mathrm{S}_2}$ denote two of the connected components of $\mGM_{\betab^\sharp}$ on $\mS_1,\mS_2\subset\mV$, respectively. Then there exists $s_0\in\{1,2,\dots,n-1 \}$ such that edge $\asf_{s_0}$ has one of its endpoints in $\mS_1$ and the other in $\mS_2$. Without loss of generality, we assume $\Head(\asf_{s_0})\in\mS_1$ and $\Tail(\asf_{s_0})\in\mS_2$. By the definition of connected components, there exists $\mS\in\mV$ such that $\mGM_{\mathrm{S}}$ is a connected component of $\mGM_{\betab^\sharp}$ on $\mathrm{S}$ and there exist node $i,j\in\mS$ such that $x_i(n-1 )$ holds $\gammab_{s_0}$ and $x_j(n-1 )$ holds $-\gammab_{s_0}$, resulting in $\{i,j\}\in\mEM_{\betab^\sharp}$. Similarly to the proof of Lemma \ref{lem:final_node_states} (ii), the dependence of $x_i(n-1 )$ and $x_j(n-1 )$ gives
\begin{enumerate}[(i)]
	\item The path from $\Head(\asf_{s_0})$ to $i$ is a directed path and $\Tail(\asf_{s_0})=j$;
	\item The path from $\Tail(\asf_{s_0})$ to $j$ is a directed path and $\Head(\asf_{s_0})=i$;
	\item The paths from $\Head(\asf_{s_0})$ to $i$ and from $\Tail(\asf_{s_0})$ to $j$ are both directed paths.
\end{enumerate}
In case (i), conditions in Theorem \ref{thm:dependence_directed_path} for $\Head(\asf_{s_0})$ and $i$ are still satisfied as a result of the dependence of $x_i(n-1 ),x_j(n-1 )$ by Theorem \ref{thm:dependence} or Theorem \ref{thm:dependence_directed_path}. As a consequence, $x_{\Head(\asf_{s_0})}(n-1 )$ and $x_i(n-1 )$ are dependent and $\{\Head(\asf_{s_0}),i\}\in\mEM_{\betab^\sharp}$. Clearly, the edges $\{\Head(\asf_{s_0}),i\}$ and $\{i,\Tail(\asf_{s_0})\}$ makes neither $\mGM_{\mathrm{S}_1}$ nor $\mGM_{\mathrm{S}_2}$ not connected components. Hence $\mGM_{\betab^\sharp}$ has one connected component, i.e., $\mGM_{\betab^\sharp}$ is connected. Therefore, $\mGM_{\betab^\sharp}$ is a tree in case (i). The same conclusion can be drawn for case (ii) and (iii). Thus $\mGM_{\betab^\sharp}$ is a tree.\\
Next we show $\Sigmab_{\betab^\sharp}$ is the Laplacian of $\mGM_{\betab^\sharp}$ by proving the following properties of $\Sigmab_{\betab^\sharp}$.
\begin{enumerate}[(i)]
	\item For any $i\neq j$
	\begin{equation}\notag
	[\Sigmab_{\betab^\sharp}]_{ij}=\left\{
	\begin{aligned}
	-\sigma_{\gammab}^2\ \ &\textnormal{ if }\{i,j\}\in\mEM_{\betab^\sharp};\\
	0\ \ &\textnormal{ otherwise}.
	\end{aligned}
	\right.
	\end{equation}
	\item For any $i\in\{1,2,\dots,n\}$
	\begin{equation}\notag
	[\Sigmab_{\betab^\sharp}]_{ii} = -\sum\limits_{j\neq i}[\Sigmab_{\betab^\sharp}]_{ij}.
	\end{equation}
\end{enumerate}
Proof of (i). If $\{i,j\}\in\mEM_{\betab^\sharp}$, by Lemma \ref{lem:final_node_states} (ii), the covariance of $x_i(n-1 )$ and $x_j(n-1 )$ can be calculated $[\Sigmab_{\betab^\sharp}]_{ij}=\cov(\gammab_s,-\gammab_s)=-\var(\gammab_s)=-\sigma^2$ for some $s\in\{1,2,\dots,n-1 \}$. If $\{i,j\}\notin\mEM_{\betab^\sharp}$, $[\Sigmab_{\betab^\sharp}]_{ij}=0$ by its definition. Then (i) has been proved.\\
Proof of (ii). Suppose node $i$ has $r_i$ random variables forming a subset of $\{\pm\gammab_1,\dots,\pm\gammab_{n-1 }\}$. Then $[\Sigmab_{\betab^\sharp}]_{ii}=r_i\sigma^2$. By Lemma \ref{lem:final_node_states} (i) and (ii), there exists $r_i$ node states that are dependent of $x_i(n-1 )$ with covariance $-\sigma^2$. This completes the proof of (ii).\\
It is clear (i) and (ii) show $\Sigmab_{\betab^\sharp}$ is the Laplacian of $\mGM_{\betab^\sharp}$.




%

\end{document}